\spnewtheorem{myclaim}{Claim}{\bfseries}{\itshape}
\spnewtheorem{fact}[theorem]{Fact}{\bfseries}{\itshape}
\spnewtheorem{observation}[theorem]{Observation}{\bfseries}{\itshape}
\newcommand{\ceil}[1]{\left\lceil #1 \right\rceil}
\newcommand{\Oh}{\mathcal{O}}
\newcommand{\List}{\mathcal{L}}
\newcommand{\Program}{\mathcal{P}}
\newcommand{\lcs}{LCS}
\newcommand{\eps}{\varepsilon}
\renewcommand{\epsilon}{\varepsilon}
\renewcommand{\r}{r}
\newenvironment{oneshot}[1]{\@begintheorem{#1.}{\unskip}}{\@endtheorem}
\DeclareMathOperator*{\per}{\ensuremath{per}}
\begin{document}
\title{Sublinear Space Algorithms for\\ the Longest Common Substring Problem}
\author{
Tomasz Kociumaka\inst{1}\fnmsep\thanks{
Supported by Polish budget funds for science in 2013-2017 as a research project under the `Diamond Grant' program
}
\and
Tatiana Starikovskaya\inst{2}\fnmsep\thanks{
Partly supported by Dynasty Foundation.
}\and
Hjalte~Wedel~Vildh{\o}j\inst{3}
}

\institute{
Institute of Informatics, University of Warsaw
\and
National Research University Higher School of Economics (HSE)
\and
Technical University of Denmark, DTU Compute
}

\date{}

\maketitle

\begin{abstract}
Given $m$ documents of total length $n$, we consider the problem of finding a longest string common to at least $d \geq 2$ of the documents. This problem is known as the \emph{longest common substring (LCS) problem} and has a classic $\Oh(n)$ space and $\Oh(n)$ time solution (Weiner [FOCS'73], Hui [CPM'92]). However, the use of linear space is impractical in many applications. In this paper we show that for any trade-off parameter $1 \leq \tau \leq n$, the LCS problem can be solved in $\Oh(\tau)$ space and $\Oh(n^2/\tau)$ time, thus providing the first smooth deterministic time-space trade-off from constant to linear space. The result uses a new and very simple algorithm, which computes a $\tau$-additive approximation to the LCS in $\Oh(n^2/\tau)$ time and $\Oh(1)$ space. We also show a time-space trade-off lower bound for deterministic branching programs, which implies that any deterministic RAM algorithm solving the LCS problem on documents from a sufficiently large alphabet in $\Oh(\tau)$ space must use $\Omega(n\sqrt{\log(n/(\tau\log n))/\log\log(n/(\tau\log n)})$ time.
\end{abstract}

\section{Introduction}
The  \emph{longest common substring (LCS) problem} is a fundamental and classic string problem with numerous applications.
Given $m$ strings $T_1,T_2,\ldots,T_m$ (the documents) from an alphabet $\Sigma$ and a parameter $2 \leq d \leq m$, the LCS problem is to compute a longest string occurring in least $d$ of the $m$ documents. We denote such a string by $\lcs$ and use $n=\sum_{i=1}^m |T_i|$ to refer to the total length of the documents.

The classic text-book solution to this problem is to build the (generalized) suffix tree of the documents and find the node that corresponds to $\lcs$~\cite{Colors,WeinerSTconstruction,Gusfield}. While this can be achieved in linear time, it comes at the cost of using $\Omega(n)$ space\footnote{Throughout the paper, we measure space as the number of words in the standard unit-cost word-RAM model with word size $w = \Theta(\log n)$ bits.} to store the suffix tree. In applications with large amounts of data or strict space constraints, this renders the classic solution impractical. A recent example of this challenge is automatic generation of signatures for identifying zero-day worms by solving the LCS problem on internet packet data~\cite{afek2013automated,kreibich2004honeycomb,wang2006anomalous}. The same challenge is faced if the length of the longest common substring is used as a measure for plagiarism detection in large document collections.

To overcome the space challenge of suffix trees, succinct and compressed data structures have been subject to extensive research~\cite{grossi2005compressed,navarro2007compressed}. Nevertheless, these data structures still use $\Omega(n)$ bits of space in the worst-case, and are thus not capable of providing truly sublinear space solutions to the LCS problem.

\subsection{Our Results}
We give new sublinear space algorithms for the LCS problem. They are designed for the word-RAM model with word size $w=\Omega(\log n)$, and work for integer alphabets $\Sigma = \{1,2,\ldots,\sigma\}$ with $\sigma = n^{\Oh(1)}$. Throughout the paper, we regard the output to the LCS problem as a pair of integers referring to a substring in the input documents, and thus the output fits in $\Oh(1)$ machine words.

As a stepping stone to our main result, we first show that an additive approximation of $\lcs$ can be computed in constant space. We use $|LCS|$ to denote the length of the longest common substring.

\begin{theorem}\label{thm:approximation}
There is an algorithm that given a parameter $\tau$, $1\le \tau \le n$, runs in $\Oh(n^2/\tau)$ time and $\Oh(1)$ space,
and outputs a string, which is common to at least $d$ documents and has length at least $|LCS|-\tau+1$.
\end{theorem}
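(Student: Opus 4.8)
The plan is to discretize the possible starting positions of a longest common substring by sampling \emph{anchors} every $\tau$ characters and, for each anchor, finding the longest prefix of the suffix rooted there that occurs in at least $d$ documents. Regard the documents as laid out consecutively and let $A=\{1,1+\tau,1+2\tau,\dots\}$ be the anchor positions, so $|A|=\Oh(n/\tau)$. For an anchor $a$ write $w_a$ for the suffix starting at $a$ (truncated at the end of the document containing $a$), and let $\ell_a$ be the length of the longest prefix of $w_a$ occurring as a substring in at least $d$ documents; the algorithm outputs a prefix attaining $\max_{a\in A}\ell_a$.

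I would first establish that this yields a $\tau$-additive approximation. The bound $\max_a \ell_a\le |\lcs|$ is immediate, since any string common to $d$ documents has length at most $|\lcs|$. For the matching lower bound, fix an occurrence of $\lcs$ spanning $[p,p+|\lcs|-1]$ inside one document; assuming $|\lcs|\ge\tau$ (otherwise the bound is vacuous), this window has length $\ge\tau$ and therefore contains an anchor $a$ with $p\le a\le p+\tau-1$. Then $T[a\mathinner{.\,.}p+|\lcs|-1]$ is simultaneously a prefix of $w_a$ and a factor of $\lcs$, so it occurs in the same $\ge d$ documents as $\lcs$; its length is $|\lcs|-(a-p)\ge |\lcs|-\tau+1$, whence $\ell_a\ge|\lcs|-\tau+1$.

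It then remains to compute $\ell_a$ for a single anchor in $\Oh(n)$ time and $\Oh(1)$ space, after which the total cost is $\Oh(|A|\cdot n)=\Oh(n^2/\tau)$. The value $\ell_a$ is the $d$-th largest among the quantities $\mu_i=\max_{j}\operatorname{lcp}(w_a,T_i[j\mathinner{.\,.}])$, the longest prefix of $w_a$ occurring in document $T_i$: the prefix $w_a[1\mathinner{.\,.}\lambda]$ occurs in $T_i$ exactly when $\lambda\le\mu_i$, so it occurs in $\ge d$ documents precisely when $\lambda$ does not exceed the $d$-th largest $\mu_i$. Each $\mu_i$ can in principle be obtained by sweeping $T_i$ once and extending the match against the fixed string $w_a$ at every alignment, and a single global counter suffices to test, for a candidate length, in how many documents the corresponding prefix occurs.

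The crux will be to make both steps run in linear time \emph{and} constant space. Extending matches naively costs $\Theta(|T_i|\cdot|w_a|)$ on periodic inputs (over a unary alphabet, say, every alignment re-verifies a long matching prefix), so I expect the main obstacle to be a constant-space, linear-time prefix-matching routine that is robust to periodicities: a two-way / critical-factorization matching scheme, which already searches in $\Oh(|T_i|)$ time and $\Oh(1)$ space, must be adapted to report the longest prefix matched at each shift. The secondary difficulty is extracting the $d$-th largest $\mu_i$ without storing all $m$ values; since only the maximum over anchors is ultimately required, I would keep a single running length and, for each anchor, decide by one counting sweep—re-deriving the $\mu_i$ on the fly rather than storing them—whether its prefix of that length reaches $d$ documents, thereby keeping the whole computation within $\Oh(1)$ words of workspace.
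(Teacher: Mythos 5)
Your anchoring argument is sound: placing an anchor every $\tau$ positions guarantees that some anchor lands within the first $\tau$ positions of an occurrence of $\lcs$, so $\max_a \ell_a \ge |\lcs|-\tau+1$; this mirrors the correctness reasoning behind the paper's \autoref{lem:one}. The gap is in the algorithmic realization, and it sits exactly where you flag it. Your plan needs, as a primitive, a routine computing in $\Oh(|T_i|)$ time and $\Oh(1)$ space the longest prefix of $w_a$ occurring in $T_i$ (the matching statistic $\mu_i$). That is a strictly stronger primitive than constant-space exact pattern matching: the known constant-space linear-time matchers (two-way/Crochemore--Perrin, Galil--Seiferas, Breslauer--Galil--Mullan) report exact occurrences of a \emph{fixed} pattern, and they do not compare pattern against text left-to-right at every alignment, so they do not report the longest matched prefix per shift; the algorithms that do (KMP-style) need $\Omega(|w_a|)$ words for a failure table. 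Such an adaptation cannot be cited as a black box and you do not supply it, so the proof is incomplete at its core step.

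Moreover, even granting that primitive, the second step --- extracting the $d$-th largest $\mu_i$ with a single running length $L$ --- does not meet the time bound. One counting sweep only decides whether $\ell_a > L$; to update $L$ to $\ell_a$ you must either raise $L$ one unit at a time (each unit costs an $\Oh(n)$ sweep, and $L$ can grow to $|\lcs|=\Theta(n)$ over the run, giving $\Oh(n^2)$ total time regardless of $\tau$) or binary/exponential search over the new value (an extra $\Theta(\log n)$ factor). There is a fix --- raise $L$ in jumps of $\Theta(\tau)$, halving the anchor spacing to keep the total additive error below $\tau$, which requires only exact matching of fixed-length prefixes --- but this fix abandons the matching-statistics plan and essentially reconstructs the paper's solution. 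The paper never computes longest extensions at all: its \autoref{lem:one} samples substrings of a \emph{fixed} length $\ell$ every $r-\ell$ positions and counts, by plain constant-space exact matching, in how many documents each occurs (NO if $|\lcs|<\ell$, YES if $|\lcs|\ge r$); \autoref{thm:approximation} then follows by a ternary search over the length, where the cost $\Oh(n^2/|R|)$ per step grows geometrically as the uncertainty interval $R$ shrinks, so the total is dominated by the final term $\Oh(n^2/\tau)$. That coupling of sampling density to the current uncertainty interval is the missing idea that simultaneously avoids both of your obstacles.
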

The solution is very simple and essentially only relies on a constant space pattern matching algorithm as a black-box. 
We expect that it could be of interest in applications where an approximation of $\lcs$ suffices.

For $\tau=1$ we obtain the corollary:

\begin{corollary}
$\lcs$ can be computed in $\Oh(1)$ space and $\Oh(n^2)$ time.
\end{corollary}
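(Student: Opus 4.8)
The plan is to simply invoke \autoref{thm:approximation} with the trade-off parameter set to its smallest admissible value $\tau = 1$. First I would substitute $\tau = 1$ into the two resource bounds: the running time $\Oh(n^2/\tau)$ collapses to $\Oh(n^2)$, while the space bound $\Oh(1)$ does not depend on $\tau$ and therefore carries over unchanged. This already matches the complexity claimed in the corollary.

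Next I would examine the approximation guarantee at $\tau = 1$. The theorem returns a string that is common to at least $d$ of the documents and whose length is at least $|LCS| - \tau + 1$. For $\tau = 1$ the additive slack $\tau - 1$ vanishes, so the returned string has length at least $|LCS| - 1 + 1 = |LCS|$.

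Finally I would close the argument by pairing this lower bound with the trivial upper bound coming from the definition of $|LCS|$. Since $|LCS|$ is by definition the maximum length of any string occurring in at least $d$ of the documents, and the algorithm's output is exactly such a string, its length cannot exceed $|LCS|$. Combined with the lower bound $|LCS|$ just established, the output has length precisely $|LCS|$ and is hence a genuine longest common substring. There is essentially no obstacle here: the only point worth stating explicitly is that the $\tau$-additive approximation degenerates into an exact solution at $\tau = 1$, because the error term $\tau - 1$ becomes zero; all of the real work has already been carried out in \autoref{thm:approximation}.
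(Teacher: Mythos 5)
Your proposal is correct and is exactly the paper's own argument: the paper obtains the corollary simply by instantiating \autoref{thm:approximation} with $\tau=1$, under which the additive error $\tau-1$ vanishes and the bounds become $\Oh(n^2)$ time and $\Oh(1)$ space. Your additional remark that the output cannot exceed $|LCS|$ in length (being itself a string common to at least $d$ documents) is the right, if implicit, closing observation.
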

To the best of our knowledge, this is the first constant space $\Oh(n^2)$-time algorithm for the LCS problem. Given that it is a simple application of a constant space pattern matching algorithm, it is an interesting result on its own.

Using \autoref{thm:approximation} we are able to establish our main result, which gives the first deterministic time-space trade-off from constant to linear space:

\begin{theorem}\label{thm:upperbound}
There is an algorithm that given a parameter $\tau$, $1\le \tau \le n$, computes $\lcs$ in $\Oh(\tau)$ space and $\Oh(n^2/\tau)$ time.
\end{theorem}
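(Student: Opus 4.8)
The plan is to use the additive approximation of \autoref{thm:approximation} as a coarse locator and then refine it within a small window. Running the constant-space algorithm with the given parameter $\tau$ produces a common substring of length $\ell \geq |LCS| - \tau + 1$, so we learn that $|LCS|$ lies in the narrow range $[\ell, \ell + \tau - 1]$. The remaining task is to decide, for each candidate length $L$ in this window of size $\Oh(\tau)$, whether some string of length $L$ occurs in at least $d$ documents, and to report the longest such $L$ together with a witnessing occurrence. Equivalently, it suffices to build a data structure that, restricted to substrings whose length lies in this $\Oh(\tau)$-sized range, behaves like the relevant part of the generalized suffix tree but occupies only $\Oh(\tau)$ space.

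First I would fix a length $L$ near the top of the window and consider the set of all length-$L$ substrings of the documents (the \emph{$L$-grams}). A string of length $L$ is common to $\geq d$ documents iff its $L$-gram appears in $\geq d$ distinct documents. The natural tool is Karp--Rabin fingerprinting: hash every length-$L$ window, and count, per fingerprint, how many distinct documents contain it. The obstacle is that there are $\Theta(n)$ windows but only $\Oh(\tau)$ space, so I cannot store all fingerprints at once. The standard remedy is a sampling/bucketing scheme: choose $\Oh(n/\tau)$ \emph{anchor} positions so that each is $\Oh(\tau)$ apart, group candidate occurrences by anchor, and process the documents in $\Oh(n/\tau)$ passes, each pass examining a batch of $\Oh(\tau)$ anchors and spending $\Oh(\tau)$ space while scanning all $n$ characters. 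Across the window of $\Oh(\tau)$ candidate lengths this multiplies the cost, but because each length is handled by a linear scan the total remains $\Oh(n^2/\tau)$ time in $\Oh(\tau)$ space.

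The second ingredient is verification: fingerprint collisions can create false positives, so whenever two windows share a fingerprint I would verify the match by direct character comparison, charging the $\Oh(L) = \Oh(n)$ cost of each verification against the batch structure so that verifications do not blow up the time bound. To make the algorithm deterministic, rather than relying on randomized Karp--Rabin hashes I would either derandomize the choice of fingerprint modulus (search over a small family of primes guaranteeing collision-freeness for the specific set of length-$L$ substrings encountered, which can be checked in the allotted budget) or replace fingerprints with an explicit deterministic grouping such as sorting the $\Oh(\tau)$-batch of sampled windows by content. Sorting a batch of $\Oh(\tau)$ strings, each compared in $\Oh(n)$ time in the worst case, again fits the $\Oh(n^2/\tau)$ budget.

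The hard part will be orchestrating the passes so that the counts of \emph{distinct} documents (not merely occurrences) are maintained correctly within $\Oh(\tau)$ space, since a given common substring may occur many times in one document and we must avoid over-counting toward the threshold $d$. I expect this to require carefully tagging each sampled occurrence with its document index and deduplicating within each fingerprint bucket before testing against $d$, while ensuring that an occurrence straddling two batches is counted in exactly one. Once the largest feasible $L$ in the window is identified along with a concrete occurrence, the output pair of indices is produced directly, and combining the $\Oh(n^2/\tau)$ approximation step with the $\Oh(n^2/\tau)$ refinement step yields the claimed $\Oh(\tau)$ space and $\Oh(n^2/\tau)$ time bound.
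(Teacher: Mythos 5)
Your opening step---running \autoref{thm:approximation} to pin $|\lcs|$ down to a window of size $\Oh(\tau)$---is exactly how the paper begins, but the refinement phase has two gaps that I believe are fatal as stated. The first is the time accounting. For a fixed candidate length $L$ there are $\Theta(n)$ windows to test, and with only $\Oh(\tau)$ space your scheme must process them in batches of $\Oh(\tau)$ fingerprints, each batch requiring an $\Oh(n)$-time scan of the whole input; that is $\Oh(n^2/\tau)$ time \emph{per length}, so iterating over all $\Oh(\tau)$ lengths in the window gives $\Oh(n^2)$, not $\Oh(n^2/\tau)$. Your sentence ``because each length is handled by a linear scan the total remains $\Oh(n^2/\tau)$'' asserts the conclusion without any mechanism for sharing work across lengths. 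Replacing the loop over lengths by binary search (which is valid, since having a length-$L$ string common to $d$ documents is monotone in $L$) still costs $\Oh((n^2/\tau)\log\tau)$---which is essentially the previously known \emph{randomized} trade-off that the paper explicitly sets out to beat. The paper avoids the per-length loop entirely: its candidate set is the set of explicit nodes of the generalized suffix tree (only $\Oh(n)$ candidates \emph{across all lengths}), and a node-exclusion rule (exclude a candidate with all its ancestors if it occurs in $\ge d$ documents, with all its descendants otherwise) guarantees every candidate is examined in at most one batch over the whole run; this amortization is precisely where the clean $\Oh(n^2/\tau)$ bound comes from.

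The second gap is derandomization, which you acknowledge but do not resolve. Certifying that a Karp--Rabin modulus is collision-free for the length-$L$ windows would require distinguishing true matches from false positives among up to $\Theta(n^2)$ pairs, with $\Oh(L)$-time character verification each---far over budget, and this difficulty is exactly why a deterministic trade-off was open. Your fallback of sorting a batch of $\Oh(\tau)$ sampled windows by content costs $\Oh(\tau\log\tau)$ comparisons at $\Oh(L)=\Oh(n)$ time each, i.e.\ $\Oh(n\tau\log\tau)$ per batch and $\Oh(n^2\log\tau)$ per length; worse, it only groups the \emph{sampled} windows, whereas counting documents requires matching each candidate against all $n$ positions, which deterministically (one constant-space pattern-matching scan per candidate, or an Aho--Corasick automaton of size $\Oh(\tau L)$) again exceeds either the time or the space budget. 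The paper's machinery is built to sidestep exactly this: documents are cut into $\Oh(n/\tau)$ chunks of length $\Oh(\tau)$, a batch of $\tau$ candidates is tested against the entire input in $\Oh(n)$ total time by adding each chunk to the suffix tree of the current chunk and running Hui's colored-leaves counting (\autoref{lem:tau-candidates-1} and \autoref{lem:tau-candidates-2}, which also handle your per-document deduplication worry deterministically), and when $\ell$ is large the periodicity-based substitution $\r_k$ (\autoref{lem:replace}) compresses the long common prefix $Q_k$ so that chunks stay of length $\Oh(\tau)$ while preserving occurrences. Without replacements for these three ingredients---batch verification in $\Oh(n)$ time, candidate amortization across lengths, and handling of long candidates in small space---the proposal does not reach deterministic $\Oh(n^2/\tau)$ time in $\Oh(\tau)$ space.
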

Previously, no deterministic trade-off was known except in the restricted setting of $n^{2/3} < \tau \le n$, where two of the authors showed that the problem allows an $\Oh((n^2/\tau) d \log^2 n( \log^2 n +d))$-time and $\Oh(\tau)$-space trade-off~\cite{LCS2013}. Our new solution is also strictly better than the $\Oh((n^2/\tau) \log n)$-time and $\Oh(\tau)$-space randomized trade-off, which correctly outputs $\lcs$ with high probability (see \cite{LCS2013} for a description).

Finally, we prove a time-space trade-off lower bound for the LCS problem over large-enough alphabets,
which remains valid even restricted to two documents.

\begin{theorem}\label{thm:lowerbound}
Given two documents of total length $n$ from an alphabet $\Sigma$ of size at least $n^2$, any deterministic RAM algorithm, which uses $\tau\le \frac{n}{\log n}$ space to compute the longest common substring of both documents, must use time $\Omega (n \sqrt{\log (n/(\tau\log n))/ \log\log (n/(\tau \log n))})$.
\end{theorem}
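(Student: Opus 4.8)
The plan is to prove the trade-off in the deterministic branching-program model and then transfer it to the RAM model, the key point of the transfer being that a RAM algorithm using $\tau$ words of space and time $T$ on an input of length $n$ can be simulated by an $R$-way branching program of length at most $T$ and space $\Oh(\tau\log n)$ bits: each configuration is described by the $\Oh(\tau)$ stored words plus $\Oh(1)$ registers, and each input read becomes a branching node with $R=\sigma=n^{\Oh(1)}$ outgoing edges. Thus it suffices to prove a branching-program bound for a hard problem that reduces to two-document $\lcs$. I would reduce from element distinctness, more precisely from its two-list variant $\mathrm{Int}$: given two lists $A=(a_1,\dots,a_p)$ and $B=(b_1,\dots,b_q)$ of total length $N$ with entries from a range of size $\Theta(N^2)$, decide whether $A$ and $B$ share a common value. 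The work of Beame, Saks, Sun and Vee establishes for element distinctness (and, by the same embedding technique, for this bipartite variant) that every deterministic $R$-way branching program solving it with space $S$ bits and time $T$ satisfies $T=\Omega\!\left(N\sqrt{\log(N/S)/\log\log(N/S)}\right)$, which is exactly the shape of the bound we are after.

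The reduction to $\lcs$ is direct and loses nothing. Fix a separator symbol $\#$ not used as a value and set $T_1=\#\,a_1\,\#\,a_2\,\#\cdots\#\,a_p\,\#$ and $T_2=\#\,b_1\,\#\,b_2\,\#\cdots\#\,b_q\,\#$, encoding each value as a single alphabet symbol. Because $\#$ occurs only between values, any common substring containing $\#$ must be aligned at a separator, so the string $\#\,v\,\#$ occurs in $T_1$ iff $v\in A$ and in $T_2$ iff $v\in B$; hence $\#\,v\,\#$ is common exactly when $v\in A\cap B$. Consequently $|\lcs|\ge 3$ when $A\cap B\neq\emptyset$, and since disjoint lists share no value symbol at all, $|\lcs|=1$ otherwise. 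Testing whether the returned substring has length at least $3$ therefore decides $\mathrm{Int}$. The two documents have total length $\Theta(N)$ and fit in an alphabet of size at most $n^2$ (one separator together with value symbols from a range of quadratic size, which is what makes the problem hard); moreover they are produced from the input purely by relabeling and inserting fixed separators, so the transformation folds into the branching program for free, as it only renames the input cells read.

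Composing the pieces gives the theorem: take any deterministic RAM algorithm computing the $\lcs$ of two documents of total length $n$ in space $\tau$, simulate it by a branching program of length $T$ and space $S=\Theta(\tau\log n)$ bits on the $\Theta(n)$-length $\mathrm{Int}$ instance, and invoke the branching-program bound with $N=\Theta(n)$ and $S=\Theta(\tau\log n)$ to obtain $T=\Omega\!\left(n\sqrt{\log(n/(\tau\log n))/\log\log(n/(\tau\log n))}\right)$; the hypothesis $\tau\le n/\log n$ is exactly what keeps $N/S=\Omega(1)$ so that the bound is non-trivial. The main obstacle is not the gadget, which is transparent, but justifying the branching-program lower bound for the bipartite (two-list) version of the problem: one must check that the Beame--Saks--Sun--Vee argument, which plants hard sub-instances by random embeddings and bounds how many collisions a short computation path can certify, still goes through when every potential collision is constrained to lie between the two lists, and does so with no loss in the exponent. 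A secondary point to verify carefully is the space accounting in the simulation, namely that $\tau$ words really correspond to $\Theta(\tau\log n)$ bits of branching-program space and that the alphabet-sized fan-out $R=\sigma$ is admissible in the branching-program model underlying the cited bound.
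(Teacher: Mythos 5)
Your overall architecture (RAM-to-branching-program simulation, then a reduction from a bipartite element-distinctness-style problem to two-document LCS) matches the paper's, and your gadget and parameter accounting are fine; if anything, the paper's gadget is simpler than yours (no separators are needed: with documents $s_1\cdots s_n$ and $t_1\cdots t_n$, the two lists intersect iff the LCS has length at least $1$). The genuine gap is exactly the step you defer: the branching-program lower bound for the two-list variant. This is not a matter of ``checking'' that the Beame--Saks--Sun--Vee argument still goes through --- for the input format you chose it does \emph{not} go through, and repairing it is the entire technical content of the paper's proof. Their machinery (the paper's Lemma~\ref{lm:largeR}, i.e.\ Corollary~5.4 of \cite{DecisionProblemsRandomized}) produces \emph{some} large embedded rectangle $(B,A_1,A_2)$ inside the YES-set, with no control over which variables land in $A_1$ and which in $A_2$; to reach a contradiction one needs that \emph{every} large embedded rectangle inside the YES-set has density $\alpha(R)\le 2^{-\Omega(m(R))}$. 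With your format ($p+q$ separate single-value variables, YES iff the lists are disjoint) this sparseness property is false: take $A_1,A_2$ both among the $a$-positions, and let the fixed partial input $\sigma$ assign the $b$-positions at most $n$ distinct values; then every pair of partial inputs $\tau_1,\tau_2$ whose values avoid that small set yields a YES input, so this rectangle has $\alpha(R)\ge(1-n/|D|)^{m(R)}\ge(1-1/(2n))^{n}$, a positive constant when $|D|\ge 2n^2$, rather than $2^{-\Omega(m(R))}$. A dense same-side rectangle of this kind can satisfy the conclusion of Corollary~5.4, so no contradiction is obtained and the argument collapses.

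The idea missing from your proposal is the paper's fix: change the variable structure so that every branching-program variable straddles \emph{both} documents. The paper defines \emph{element bidistinctness} on $n$ variables over the domain $D\times D$, the $i$-th variable being the pair $(s_i,t_i)$, with answer $1$ iff $s_i\ne t_j$ and $t_i\ne s_j$ for all $i,j$. Now any rectangle constraint between $A_1$ and $A_2$ is automatically a cross-list constraint, and the paper proves the required sparseness bound $\alpha(R)\le 2^{-2m(R)}$ (Lemma~\ref{lm:RisSmall}, via the counting argument on the sets $X_j,Y_j$ of first and second coordinates) together with the density bound $|EB^{-1}(1)|\ge |D|^{2n}/e$ (Lemma~\ref{lm:EBisBIG}); these are precisely the two ingredients needed to rerun the proof of Theorem~6.13 of \cite{DecisionProblemsRandomized}. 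Note that your claimed two-list lower bound is in fact \emph{true}, but the route to it goes through this pairing: a branching program reading single values is converted, read by read and with no loss in length or size, into one reading pairs, at which point the EB bound applies. So to complete your proof you would need to introduce the pair-variable formulation and prove the rectangle-sparseness lemma for it --- which is to say, reconstruct the paper's element bidistinctness argument.
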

We prove the bound for non-uniform deterministic branching programs, which are known to simulate deterministic RAM algorithms with constant overhead.
The lower bound of \autoref{thm:lowerbound} implies that the classic linear-time solution is close to asymptotically optimal in the sense that there is no hope for a linear-time and $o(n/ \log n)$-space algorithm that solves the LCS problem on polynomial-sized alphabets.


\section{Upper Bounds}
Let $T$ be a string of length $n > 0$. Throughout the paper, we use the notation $T[i..j]$, $1 \leq i \leq j \leq n$, to denote the substring of $T$ starting at position $i$ and ending at position $j$ (both inclusive). We use the shorthand $T[..i]$ and $T[i..]$ to denote $T[1..i]$ and $T[i..n]$ respectively. 

A suffix tree of $T$ is a compacted trie on suffixes of $T$ appended with a unique letter (sentinel) $ \$$ to guarantee one-to-one correspondence between suffixes and leaves of the tree. The suffix tree occupies linear space. Moreover, if the size of the alphabet is polynomial in the length of $T$, then the suffix tree can be constructed in linear time~\cite{DBLP:conf/focs/Farach97}. We refer to nodes of the suffix tree as \emph{explicit nodes}, and to nodes of the underlying trie, which are not preserved in the suffix tree, as \emph{implicit nodes}. Note that each substring of $T$ 
corresponds to a unique explicit or implicit node, the latter can be specified by the edge it belongs to and its distance to the upper endpoint of the edge.

A generalized suffix tree of strings $T_1$, $T_2, \ldots, T_m$ is a trie on all suffixes of these strings appended with sentinels $ \$_i$. It occupies  linear space and for polynomial-sized alphabets can also be constructed in linear time.

\paragraph{Classic solution.}
As a warm-up, we briefly recall how to solve the LCS problem in linear time and space. Consider the generalized suffix tree of the documents $T_1, T_2, \ldots, T_m$, where leaves corresponding to suffixes of $T_i$, $i=1,2, \ldots,m$, are painted with color $i$. 
%
%
The main observation is that $\lcs$ is the label of a deepest explicit node with leaves of at least $d$ distinct colors in its subtree. Hui~\cite{Colors} showed that given a tree with $\Oh(n)$ nodes where some leaves are colored, it is possible to compute the number of distinctly colored leaves below all nodes in $\Oh(n)$ time. Consequently, we can locate the node corresponding to $\lcs$ in $\Oh(n)$ time and $\Oh(n)$ space.

\subsection{Approximating \lcs\ in Constant Space}
Given a pattern and a string, it is possible to find all occurrences of the pattern in the string using constant space and linear time (see~\cite{DBLP:journals/tcs/BreslauerGM13} and references therein). 
We use this result in the following $\Oh(1)$-space additive approximation algorithm.

\begin{lemma}\label{lem:one}
There is an algorithm that given integer parameters $\ell$, $r$ satisfying $1\le \ell < r \le n$,
runs in $\Oh\big(\frac{n^2}{r-\ell}\big)$ time and constant space, and returns NO if $|\lcs| < \ell$, YES if $|\lcs| \ge r$, and an arbitrary answer otherwise.
\end{lemma}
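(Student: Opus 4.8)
The plan is to reduce this gap decision problem to a controlled number of constant-space exact pattern matching queries, driven by a sampling (anchoring) argument with spacing equal to the gap $g = r - \ell$. Concretely, I would designate as \emph{anchors} the positions that are multiples of $g$, in every document. For each anchor $p$ in each document $T_i$ with $p + \ell - 1 \le |T_i|$, I treat the length-$\ell$ substring $P = T_i[p..p+\ell-1]$ as a candidate, accessed in place as a range of the input rather than copied out. For each candidate $P$ I invoke the black-box constant-space, linear-time matcher document by document, maintaining a single counter of how many documents contain $P$; if the counter ever reaches $d$ the algorithm returns YES, and if no candidate reaches $d$ it returns NO.

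Correctness in the NO direction is immediate: the algorithm outputs YES only after exhibiting a length-$\ell$ string occurring in at least $d$ documents, which forces $|\lcs| \ge \ell$. Hence when $|\lcs| < \ell$ no candidate can succeed and the algorithm returns NO. The YES direction is where the sampling argument does the work. Suppose $|\lcs| \ge r$, let $W$ be a common substring of length exactly $r$ occurring in at least $d$ documents, and fix one occurrence $W = T_i[j..j+r-1]$. Among the $g$ consecutive positions $j, j+1, \ldots, j+g-1$ there is exactly one multiple of $g$, say $p$; since $p \le j + g - 1$, the window $T_i[p..p+\ell-1]$ ends at $p + \ell - 1 \le j + r - 2 \le j + r - 1$, so it lies entirely inside $W$ and is a substring of it. As every occurrence of $W$ contains an occurrence of this substring, the candidate $P = T_i[p..p+\ell-1]$ occurs in all of the at least $d$ documents containing $W$, and the algorithm returns YES.

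For the bounds, the only state retained is a constant number of indices and one counter, so the space is $\Oh(1)$ (relying on the matcher's constant auxiliary space). The number of anchors in $T_i$ is $\Oh(|T_i|/g)$, so summing over documents there are $\Oh(n/(r-\ell))$ candidates. Searching one fixed length-$\ell$ pattern across all documents costs $\Oh(n)$ time: documents shorter than $\ell$ are discarded, and the at most $n/\ell$ remaining documents have total length $\le n$, on which the matcher, including its $\Oh(\ell)$ preprocessing per document, runs in $\Oh(n)$ total. Multiplying the two factors gives $\Oh\big(\frac{n^2}{r-\ell}\big)$ time. The step needing the most care is precisely this anchoring analysis: setting the spacing equal to the gap $g = r-\ell$ is exactly what simultaneously guarantees that a length-$r$ common substring always contains an anchored length-$\ell$ candidate and keeps the candidate count at $\Oh(n/(r-\ell))$, so that both the approximation guarantee and the time bound hold; counting distinct documents with a single counter and calling the constant-space matcher are then routine.
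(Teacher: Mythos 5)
Your proof is correct and follows essentially the same approach as the paper: sample length-$\ell$ candidate windows at positions spaced $r-\ell$ apart, count the documents containing each candidate with a constant-space linear-time matcher, and observe that every occurrence of a length-$r$ common substring must contain a sampled window. The only (immaterial) difference is that you anchor at multiples of $r-\ell$ within each document separately, while the paper samples positions in the sentinel-separated concatenation $T_1\$_1\cdots T_m\$_m$.
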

\begin{proof}
Let $S = T_1\$_1 T_2\$_2 \ldots T_m\$_m$ and $\tau = r-\ell$. Consider substrings $S_k = S[k\tau+1..k\tau+\ell]$ for $k=0,\ldots,\big\lfloor\frac{|S|}{\tau}\big\rfloor$.
For each $S_k$ we use a constant-space pattern matching algorithm to count the number of documents $T_i$ containing an occurrence of $S_k$. We return YES if for any $S_k$ this value is at least $d$ and NO otherwise.

If $|\lcs| < \ell$, then any substring of $S$ of length $\ell$~--- in particular, any $S_k$~--- occurs in less than $d$ documents. Consequently, in this case the algorithm will return NO. On the other hand, any substring of $S$ of length $r$ contains some $S_k$, so if $|\lcs| \ge r$, then some $S_k$ occurs in at least $d$ documents, and in this case the algorithm will return YES.
\qed
\end{proof}
To establish \autoref{thm:approximation} we perform a ternary search using \autoref{lem:one} with the modification that if the algorithm
returns YES, it also outputs a string of length $\ell$ common to at least $d$ documents.
We maintain an interval $R$ containing $|\lcs|$; initially $R=[1, n]$. In each step we  set $\ell$ and $r$ (approximately) in $1/3$ and $2/3$ of $R$, so that we can reduce $R$ by $\lfloor{|R|/3}\rfloor$. We stop when $|R| \leq \tau$. The time complexity bound forms a geometric progression dominated by the last term, which is $\Oh(n^2/\tau)$. This concludes the proof of the following result.

\begin{oneshot}{\autoref{thm:approximation}}
There is an algorithm that given a parameter $\tau$, $1\le \tau \le n$, runs in $\Oh(n^2/\tau)$ time and $\Oh(1)$ space,
and outputs a string, which is common to at least $d$ documents and has length at least $|LCS|-\tau+1$.
\end{oneshot}

\subsection{An $\Oh(\tau)$-Space and $\Oh(n^2/\tau)$-Time Solution}
We now return to the main goal of this section. Using \autoref{thm:approximation}, we can assume to know $\ell$ such that $\ell \le |\lcs| < \ell+\tau$. Organization of the text below is as follows. 
First, we explain how to compute $\lcs$ if $\ell = 1$. 
Then we extend our solution so that it works with larger values of $\ell$.
Here we additionally assume that the alphabet size is constant and later, in \autoref{ss:large}, we remove this assumption. 

\subsubsection*{Case $\ell = 1$.}
From the documents $T_1, T_2, \ldots, T_m$ we compose two lists of strings. First, we consider ``short'' documents $T_j$ with $|T_j| < \tau$. 
We split them into groups of total length in $[\tau, 1+2\tau]$ (except for the last group, possibly). 
For each group we add a concatenation of the documents in this group, appended with sentinels $ \$_j$, to a list $\List_1$. 
Separately, we consider ``long'' documents $T_j$ with $|T_j| \ge \tau$. 
For each of them we add to a list $\List_2$ its substrings starting at positions of the form $k\tau+1$ for integer $k$ and in total covering $T_j$. 
These substrings are chosen to have length $2\tau$, except for the last whose length is in $[\tau, 2\tau]$. 
We assume that substrings of the same document $T_j$ occur contiguously in $\List_2$ and append them with $ \$_j$. 
The lists $\List_1$ and $\List_2$ will not be stored explicitly but will be generated on the fly while scanning the input. 
Note that $|\List_1 \cup \List_2| = \Oh(n/\tau)$. 

\begin{observation}
Since the length of $\lcs$ is between $1$ and $\tau$, $\lcs$ is a substring of some string $S_k \in \List_1 \cup \List_2$. Moreover, it is a label of an explicit node of the suffix tree of $S_k$ or of a node where a suffix of some $S_i\in \List_1\cup\List_2 $ branches out of the suffix tree of $S_k$.
\end{observation}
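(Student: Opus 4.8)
The plan is to prove the Observation by a length argument followed by a standard suffix-tree characterization. Since $\ell=1$ we have $1\le |\lcs|<\tau$, and $\lcs$ occurs in at least $d$ documents. First I would establish that $\lcs$ is a substring of some $S_k\in\List_1\cup\List_2$. The key point is that the lists were built so that every window of length $\tau$ in every document is fully contained in some listed string: short documents are concatenated into blocks of length at least $\tau$, and each long document is cut into overlapping pieces of length $2\tau$ starting every $\tau$ positions, so any substring of length at most $\tau$ of any $T_j$ lies entirely within one listed string $S_k$. Because $|\lcs|<\tau$, an occurrence of $\lcs$ in any one of the documents containing it is such a short window and hence appears inside some $S_k$; this gives the first sentence of the Observation.

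For the second sentence I would argue about where $\lcs$ sits in the suffix tree of $S_k$. Let $w=\lcs$. The claim is that $w$ labels either an explicit node of the suffix tree of $S_k$, or a point where a suffix of some (possibly different) string $S_i\in\List_1\cup\List_2$ branches off. The plan is to consider the locus of $w$ in the suffix tree of $S_k$ and split into two cases. If the locus is an explicit node we are done. Otherwise the locus is implicit, lying strictly inside some edge; I would then use maximality of $\lcs$: since $w$ is a longest string common to $d$ documents, no proper right-extension $wc$ can still occur in $d$ documents. Inside $S_k$ alone, an implicit locus means $w$ is always followed by the same single character $c$, so $wc$ occurs in $S_k$ exactly as often as $w$ does. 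The branching that witnesses maximality must therefore come from an \emph{external} suffix — a suffix of some other $S_i$ — that matches $w$ but is followed by a different character (or terminates), i.e.\ that branches out of the edge of $S_k$ at the locus of $w$. That is precisely the second alternative in the Observation.

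The main obstacle I expect is making the maximality/branching step fully rigorous: one has to be careful that ``occurs in at least $d$ documents'' is a statement about the original documents $T_1,\dots,T_m$, whereas the suffix tree is built on a single $S_k$, so the counting of documents does not live in one tree. The clean way to handle this is to phrase the argument purely in terms of right-extensions across the whole collection $\List_1\cup\List_2$: because the union of listed strings covers all length-$\tau$ windows of all documents, occurrences of $w$ and of any extension $wc$ in the documents are faithfully represented as matches inside the listed strings. Thus if every suffix of every $S_i$ that starts with $w$ were followed by the same character $c$ inside $S_k$'s tree (no branching), then $wc$ would occur in exactly the same set of documents as $w$, contradicting the maximality of $\lcs$. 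Hence either $w$ is explicit in the suffix tree of $S_k$, or some external suffix branches out at its locus.

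I would close by noting that this dichotomy is exactly what makes the subsequent algorithm work: it suffices to enumerate, for each $S_k$, the explicit nodes of its suffix tree together with the branching points induced by streaming the suffixes of the other strings through that tree, which keeps the working space at $\Oh(\tau)$.
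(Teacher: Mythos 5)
Your proposal is correct and matches the reasoning the paper leaves implicit (the Observation is stated without proof): the overlapping-window/group construction guarantees that every substring of length at most $\tau$, together with its following symbol, lies inside some listed string, and the right-maximality of $\lcs$ (it cannot be extendable by the same character in every occurrence, since it occurs in $d\ge 2$ documents and the per-document sentinels are distinct) forces its locus to be either an explicit node of the suffix tree of $S_k$ or a point where a suffix of another $S_i$ branches out. The only detail worth tightening is the terminating/sentinel case in your maximality step — if the unique continuation character $c$ inside $S_k$ were a sentinel $\$_j$, the contradiction comes not from extending $\lcs$ but from the fact that an occurrence in a second document is followed by a regular character or a different sentinel, which is exactly the branching witness — but this is routine and your "(or terminates)" remark already points at it.
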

We process candidate substrings in groups of $\tau$, using the two lemmas below.

\begin{lemma}\label{lem:tau-candidates-1}
Consider a suffix tree of $S_k$ with $\tau$ marked nodes (explicit or implicit). There is an $\Oh(n)$-time and $\Oh(\tau)$-space algorithm that counts the number of short documents containing an occurrence of the label of each marked node.
\end{lemma}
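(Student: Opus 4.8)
The plan is to count, for each of the $\tau$ marked nodes in the suffix tree of $S_k$, how many short documents contain the node's label as a substring. Since the short documents are stored in the list $\List_1$ as concatenations of total length $\Oh(n)$, I want to process them by scanning $\List_1$ once and, for every position in a short document, determining which marked labels occur starting there. The key structural fact is that a string $P$ occurs at position $p$ of a document exactly when $P$ is a prefix of the suffix starting at $p$, i.e.\ when the locus of $P$ lies on the root-to-leaf path of that suffix in the suffix tree of $S_k$. So the task reduces to: for each suffix of each short document, find which marked nodes lie on its path in the suffix tree of $S_k$, and count distinct documents per marked node.

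First I would build the suffix tree of $S_k$, which has size $\Oh(\tau)$ since $|S_k| = \Oh(\tau)$, and this fits in the space budget. Next I would preprocess the suffix tree so that for each suffix of a short document I can traverse its matching path efficiently. The natural tool is to match each suffix character-by-character against the suffix tree, following edges as long as characters agree; whenever the traversal passes through or reaches a marked node, that node's label occurs at the current starting position. Because a single suffix of length up to $n$ can be matched in time proportional to the matched length, and restarting the match from scratch at every one of the $\Oh(n)$ starting positions would cost $\Oh(n \cdot \tau)$ in the worst case, I would instead use suffix links (or a Matching-Statistics / Ukkonen-style amortized traversal) so that the total matching work over all suffixes of all short documents is $\Oh(n)$.

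To turn matches into a per-node count of \emph{distinct} documents, I would, for each marked node $v$, maintain a counter together with the identity of the last document that incremented it, so that multiple occurrences of $v$'s label within a single short document are counted only once; since documents are processed one at a time as we scan $\List_1$, this deduplication is a simple last-seen check using $\Oh(\tau)$ total space for the $\tau$ counters. A subtlety is that a marked node may be implicit, sitting in the middle of an edge; I would handle this by associating each marked implicit node with its edge and depth, and registering an occurrence whenever the traversal reaches at least that depth along the corresponding edge. Marking and testing these along the matching path must be $\Oh(1)$ amortized per step to keep the total at $\Oh(n)$.

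The main obstacle I expect is achieving the $\Oh(n)$ total matching time rather than $\Oh(n\tau)$: naively each of the $\Theta(n)$ suffixes could revisit many marked nodes, and a marked ancestor of the current locus contributes an occurrence for \emph{every} suffix whose path passes through it, so charging those occurrences without an explicit per-suffix ancestor walk is delicate. I would resolve this by observing that as we scan a document left to right and follow suffix links, the matched locus changes by amortized $\Oh(1)$ node-steps per input character (the classical Matching-Statistics amortization), and by propagating occurrence counts \emph{up} the tree in a single $\Oh(\tau)$ post-processing pass: rather than registering each marked ancestor separately during the scan, I record only the deepest reached locus per suffix, then push counts from descendants to their marked ancestors once at the end, so each marked node inherits the distinct-document count of the entire subtree below it. Combining the $\Oh(n)$ scanning cost with the $\Oh(\tau)$ aggregation cost yields the claimed $\Oh(n)$ time and $\Oh(\tau)$ space.
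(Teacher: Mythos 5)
Your scanning phase is fine: computing, for every suffix of every short document, the deepest locus it reaches in the suffix tree of $S_k$ via suffix links (matching statistics) is a legitimate $\Oh(n)$-time alternative to what the paper does (the paper instead ``adds'' each group $S_i\in\List_1$ to the suffix tree of $S_k$, i.e.\ builds the generalized suffix tree of $S_k$ and $S_i$). The genuine gap is in your aggregation step. You defer all ancestor accounting to a single $\Oh(\tau)$ pass at the end that lets ``each marked node inherit the distinct-document count of the entire subtree below it.'' But subtree sums of per-locus counts are \emph{not} distinct-document counts: a single short document whose suffixes reach deepest loci in two different subtrees below a common marked ancestor $v$ gets counted twice at $v$. (Concretely, with $S_k=\str{aab}$, marked node $v$ labelled $\str{a}$, and a document containing both $\str{aa}$ and $\str{ab}$, your propagation charges that document to $v$ twice.) Your last-seen check only deduplicates repeated hits at the \emph{same} locus; by the time you propagate upward, the document identities behind the per-locus counters are gone, and storing them (or the deepest locus of each of the $\Theta(n)$ suffixes) would blow the $\Oh(\tau)$ space budget. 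The obvious repair---doing the upward propagation separately for each document, with the last-seen check applied at every marked ancestor---is correct but costs $\Oh(\tau)$ per document, and there can be $\Theta(n)$ short documents (e.g.\ all of length $1$), giving $\Oh(n\tau)$ time.

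What is missing is precisely the tool the paper invokes: Hui's algorithm for counting \emph{distinctly colored} leaves below every node, which handles cross-subtree duplicates via LCA corrections rather than plain subtree sums. The paper processes one group $S_i\in\List_1$ (total length $\Theta(\tau)$) at a time: build the generalized suffix tree of $S_k$ and $S_i$, color each leaf of a suffix of $S_i$ by the document $T_j$ in which that suffix starts, run Hui's algorithm to get, for each marked node, the number of distinct colors in its subtree, add that number to the node's counter, and then strip the colors and added nodes. Since every short document lies entirely inside exactly one group, summing these per-group distinct counts over groups never double counts, and the cost is $\Oh(\tau)$ per group times $|\List_1|=\Oh(n/\tau)$ groups, i.e.\ $\Oh(n)$ total. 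You would need to replace your final propagation pass with this per-group distinct-color counting (your matching-statistics scan can stay) for the lemma to hold.
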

\begin{proof}
For each marked node we maintain a counter $c(v)$ storing the number of short documents the label of $v$ occurs in. 
Counters are initialized with zeros. We add each string $S_i \in \List_1$ to the suffix tree of $S_k$ in $\Oh(\tau)$ time. 
By \emph{adding} a string to the suffix tree of another string, we mean constructing the generalized suffix tree of both strings and establishing pointers from explicit nodes of the generalized suffix tree to the corresponding nodes of the original suffix tree. 
We then paint leaves representing suffixes of $S_i$: namely, we paint a leaf with color $j$ if the corresponding suffix of $S_i$ starts within a document $T_j$ (remember that $S_i$ is a concatenation of short documents). 
Then the label of a marked node occurs in $T_j$ iff this node has a leaf of color $j$ in its subtree. 
Using Hui's algorithm we compute the number of distinctly colored leaves in the subtree of each marked node $v$ and add this number to $c(v)$. 
After updating the counters, we remove colors and newly added nodes from the tree.
Since all sentinels in the strings in $\List_1$ are distinct, the algorithm is correct. It runs in $\Oh(|\List_1| \tau+\tau) = \Oh(n)$ time.
\qed
\end{proof}

\begin{lemma}\label{lem:tau-candidates-2}
Consider a suffix tree of $S_k$ with $\tau$ marked nodes (explicit or implicit). 
There is an $\Oh(n)$-time and $\Oh(\tau)$-space algorithm that counts the number of long documents containing an occurrence of the label of each marked node.
\end{lemma}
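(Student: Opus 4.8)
The plan is to mirror the strategy of \autoref{lem:tau-candidates-1}, adding the strings of $\List_2$ to the suffix tree of $S_k$, painting their leaves, and aggregating occurrence information over the marked nodes. The essential new difficulty is that a single long document $T_j$ is not represented by one string of $\List_2$ but is chopped into several overlapping pieces, each carrying the same sentinel $\$_j$. Hence the distinct-color trick used for $\List_1$ no longer directly applies: counting the pieces in which the label $P_v$ of a marked node $v$ occurs would over-count $T_j$ whenever $P_v$ appears in more than one of its pieces. Avoiding this double counting, while respecting the $\Oh(\tau)$ space budget (which forbids holding all pieces of a long document in memory simultaneously), is the crux of the argument.

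First I would exploit that the pieces of $T_j$ start at positions $k\tau+1$ and have length at least $\tau$, so that consecutive pieces overlap in $\tau$ positions, while every candidate label has length at most $\tau$ in the case $\ell=1$. Consequently each occurrence of $P_v$ inside $T_j$ lies entirely within at least one piece, so $P_v$ occurs in $T_j$ if and only if it occurs in some piece of $T_j$. This reduces the task to deciding, for every marked node and every long document, whether the label appears in at least one of the document's pieces.

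To carry this out I would process the pieces one at a time. For each piece $S_i$ I add it to the suffix tree of $S_k$ in $\Oh(\tau)$ time and, in a single bottom-up traversal, determine for each of the $\tau$ marked nodes $v$ whether $P_v$ occurs (as a sentinel-free substring) in $S_i$, i.e. whether the subtree of $v$ contains a leaf from $S_i$. Rather than updating the global counters immediately, I maintain a per-document flag $f(v)\in\{0,1\}$ recording whether $P_v$ has been seen in any piece of the current document so far. Because the pieces of a given $T_j$ occur contiguously in $\List_2$, all of $T_j$'s pieces are handled consecutively; once the last of them has been processed, I increment $c(v)$ by $f(v)$ for every marked node and reset the flags before moving on. The OR-accumulation in $f$ guarantees that each long document contributes at most one to each counter, which is exactly the distinct-document count we want.

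For the complexity, each piece has length $\Oh(\tau)$, so adding it, running the traversal over the $\Oh(\tau)$-sized generalized suffix tree, updating the $\tau$ flags, and removing the piece all cost $\Oh(\tau)$ time; as $|\List_2|=\Oh(n/\tau)$, the total time is $\Oh(n)$, and the working space — the suffix tree of $S_k$, a single piece, and the arrays of $\tau$ counters and flags — is $\Oh(\tau)$. I expect the main obstacle to be precisely the bookkeeping that prevents double counting across the many pieces of one long document; handling the sentinel boundary so that only genuine substrings of $T_j$ (not strings crossing $\$_j$) are counted is a routine but necessary detail.
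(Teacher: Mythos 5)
Your proposal is correct and follows essentially the same route as the paper: both add the strings of $\List_2$ one at a time to the suffix tree of $S_k$, check for each marked node whether its subtree contains a leaf of the current piece, and exploit the contiguity of a document's pieces in $\List_2$ to deduplicate via $\Oh(1)$ extra state per marked node (your boolean flag $f(v)$ reset at document boundaries is functionally identical to the paper's variable $m(v)$ storing the last counted document index). The complexity accounting, $\Oh(|\List_2|\tau+\tau)=\Oh(n)$ time and $\Oh(\tau)$ space, also matches.
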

\begin{proof}
For each of the marked nodes we maintain a variable $c(v)$ counting the documents where the label of $v$ occurs.
A single document might correspond to several strings $S_i$, so
we also keep an additional variable $m(v)$, which prevents increasing $c(v)$ several times for a single document.
As in~\autoref{lem:tau-candidates-1}, we add each string $S_i \in \List_2$ to the suffix tree of $S_k$. 
For each marked node $v$ whose subtree contains a suffix of $S_i$ ending with $ \$_j$, we compare $m(v)$ with $j$. 
We increase $c(v)$ only if $m(v)\ne  j$, also setting $m(v) = j$ to prevent further increases for the same document.
Since strings corresponding to the $T_j$ occur contiguously in $\List_2$, the algorithm is correct. Its running time is $\Oh(|\List_2| \tau+\tau) = \Oh(n)$.
\qed
\end{proof}

Let $\List = \List_1 \cup \List_2$. If $\lcs$ is a substring of $S_k \in \mathcal{L}$, we can find it as follows: we construct the suffix tree of $S_k$, mark its explicit nodes and nodes where suffixes of $S_i \in \List$ ($i\neq k$) branch out, and determine the deepest of them which occurs in at least $d$ documents. 
Repeating for all $S_k \in \List$, this allows us to determine $\lcs$. 
To reduce the space usage to $\Oh(\tau)$, we use \autoref{lem:tau-candidates-1} and \autoref{lem:tau-candidates-2} for batches of $\Oh(\tau)$ marked nodes in the suffix tree of $S_k$ at a time. 
Labels of all marked node are also labels of explicit nodes in the generalized suffix tree of $T_1,\ldots,T_m$.
In order to achieve good running time we will make sure that marked nodes have, over all $S_k\in \List$, distinct labels. This will imply that we use \autoref{lem:tau-candidates-1} and \autoref{lem:tau-candidates-2} only $\Oh(n/\tau)$ times, and hence spend $\Oh(n^2/\tau)$ time overall.

We consider each of the substrings $S_k \in \List$ in order. 
We start by constructing a suffix tree for $S_k$. 
To make sure the labels of marked nodes are distinct, we shall exclude some (explicit and implicit) nodes of $S_k$. 
Each node is going to be excluded together with all its ancestors or descendants, so that it is easy to test whether a particular node is excluded. 
(It suffices to remember the highest and the lowest non-excluded node on each edge, if any,  $\Oh(\tau)$ nodes in total.)

First of all, we do not need to consider substrings of $S_1,\ldots,S_{k-1}$. Therefore we add each of strings $S_1, S_2, \ldots, S_{k-1}$ to the suffix tree (one by one) and exclude nodes common to $S_k$ and these strings from consideration. Note that in this case a node is excluded with all its ancestors.

Then we consider all strings $S_k, S_{k+1}, S_{k+2}, \ldots$ in turn. 
For each string we construct the generalized suffix tree of $S_k$ and the current $S_i$ and iterate over explicit nodes of the tree whose labels are substrings of $S_k$. 
If a node has not been excluded, we mark it. 
Once we have $\tau$ marked nodes (and if any marked nodes are left at the end), we apply \autoref{lem:tau-candidates-1} and \autoref{lem:tau-candidates-2}. 
If the label of a marked node occurs in at least $d$ documents, then we can exclude the marked node and all its ancestors. Otherwise, we can exclude it with all its descendants.

Recall that $\lcs$ is a label of one of the explicit inner nodes of the generalized suffix tree of $T_1, T_2, \ldots, T_m$, i.e., there are $\Oh(n)$ possible candidates for~$\lcs$. 
Moreover, we are only interested in candidates of length at most $\tau$, and each such candidate corresponds to an explicit node of the generalized suffix tree of a pair of strings from $\List$. 
The algorithm process each such candidate exactly once due to node exclusion. 
Thus, its running time is $\Oh(\frac{n}{\tau} n+\tau) = \Oh(n^2 / \tau)$. 
At any moment it uses $\Oh(\tau)$ space.

\subsubsection*{General case.}
If $\ell < 10 \tau$ we can still use the technique above, adjusting the multiplicative constants
in the complexity bounds. Thus, we can assume $\ell > 10 \tau$.

Documents shorter than $\ell$ cannot contain $\lcs$ and we ignore them. For each of the remaining documents $T_j$ we add to a list $\List$ its substrings starting at positions of the form $k\tau+1$ for integer $k$ and in total covering $T_j$. The substrings are chosen to have length $\ell + 2\tau$, except for the last whose length is in the interval $[\ell, \ell+2\tau]$. Each substring is appended with $ \$_j$, and we assume that the substrings of the same document occur contiguously.

\begin{observation}\label{obs:todo}
Since the length of $\lcs$ is between $\ell$ and $\ell+\tau$, $\lcs$ is a substring of some string $S_k \in \List$. Moreover, it is the label of a node of the suffix tree of $S_k$ where a suffix of another string $S_i \in \List$ branches out. (We do not need to consider explicit nodes of the suffix tree as there are no short documents.)
\end{observation}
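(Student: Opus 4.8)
The plan is to establish the two assertions in turn: first that $\lcs$ occurs as a substring of some $S_k\in\List$, and then that $\lcs$ labels a node of the suffix tree of $S_k$ at which a suffix of a \emph{different} string $S_i\in\List$ branches out. I expect the first part to be a routine index computation and the second to rely essentially on the maximality of $\lcs$.

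\textbf{Part 1 (covering).} Since $|\lcs|\ge\ell$, only documents of length at least $\ell$ can contain $\lcs$, and these are exactly the documents from which $\List$ is built. Fix one such document $T_j$ and an occurrence of $\lcs$ in it starting at position $p$. I would take the substring $S_k$ of $T_j$ whose starting position $k\tau+1$ is the largest one not exceeding $p$ (this is well defined, as the first substring starts at position $1$); then the offset of $p$ inside $S_k$ is smaller than $\tau$. Because every non-final substring has length $\ell+2\tau$ and $|\lcs|<\ell+\tau$, the occurrence ends before offset $\ell+2\tau$ and hence lies inside $S_k$; for the final substring it fits because that substring reaches the end of $T_j$ and the occurrence ends at position at most $|T_j|$. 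I would only record these inequalities and omit the arithmetic.

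\textbf{Part 2 (branching).} As $\lcs$ is a substring of $S_k$, it already corresponds to a (possibly implicit) node $u$ of the suffix tree of $S_k$, so it remains to exhibit a string $S_i\in\List$, coming from a document other than $T_j$, a suffix of which branches out exactly at $u$. A suffix of $S_i$ branches out at $u$ precisely when $S_i$ contains an occurrence of $\lcs$ followed by a character $b$ such that $\lcs\,b$ is \emph{not} a substring of $S_k$; here $b$ may be a real letter or the sentinel $\$$ terminating $S_i$. For every $S\in\List$ containing $\lcs$ write $C_S$ for the set of characters immediately following an occurrence of $\lcs$ in $S$. What I must produce is a pair $S_k,S_i$ from two distinct documents with $C_{S_i}\not\subseteq C_{S_k}$; note that such $S_i$ and $S_k$ are automatically different strings, since substrings of different documents carry distinct sentinels.

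\textbf{The main obstacle} is ruling out the degenerate situation where no such pair exists, that is, $C_{S_i}\subseteq C_{S_k}$ for every ordered pair of strings from distinct documents. Applying this inclusion in both directions (and chaining through a third document to compare two strings of the same document, which is possible because $\lcs$ occurs in at least $d\ge 2$ documents) forces all the sets $C_S$ to coincide with a single set $C$. Since the documents carry pairwise distinct sentinels, $C$ cannot contain any sentinel, so $C\subseteq\Sigma$ and no occurrence of $\lcs$ ends a document. As the substrings of $\List$ overlap enough to realise, inside some string, each letter that follows $\lcs$ in its document, the set of letters following $\lcs$ in $T_j$ equals $C$ for \emph{every} document $T_j$ containing $\lcs$. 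Choosing any $x\in C$ (it is a real letter, and $C\ne\emptyset$), the string $\lcs\,x$ then occurs in all $\ge d$ documents containing $\lcs$, contradicting the maximality of $\lcs$. Hence a good pair $(S_k,S_i)$ exists, the corresponding suffix of $S_i$ branches out at $u$, and its branching node is an explicit node of the generalized suffix tree of $S_k$ and $S_i$ whose label $\lcs$ is a substring of $S_k$, exactly as required. This also justifies the parenthetical remark: the already-explicit nodes of $S_k$ by themselves only witness occurrences inside the single document $T_j$, so inspecting the $S_i$-branch-out nodes is what certifies the second document and suffices.
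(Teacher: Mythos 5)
The paper states this Observation without any proof at all, so there is no line-by-line comparison to make; your argument supplies the missing justification, and it does so correctly --- indeed it proves a slightly stronger statement, namely that the branching suffix can be taken from a \emph{different document}, not merely a different string of $\List$. Your Part~1 is the intended covering argument: consecutive starting positions differ by $\tau$, every occurrence of $\lcs$ has length less than $\ell+\tau$, and the non-final pieces have length $\ell+2\tau$ (the final piece reaches the end of the document), so every occurrence --- even extended by one following character or sentinel --- fits inside some $S_k\in\List$. Your Part~2 correctly isolates the one genuinely non-trivial point, which the paper glosses over: global right-maximality of $\lcs$ (somewhere two occurrences with distinct followers, or an occurrence at a document end) does \emph{not} suffice, because both witnesses could lie inside a single string of $\List$, making $\lcs$ explicit in that string's own tree without any cross-string branching. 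Your all-pairs argument closes exactly this gap: if $C_{S_i}\subseteq C_{S_k}$ held for every cross-document pair, then all follower sets would coincide with a single sentinel-free set $C$, any letter $x\in C$ would extend $\lcs$ in every document containing it, and $\lcs\,x$ would be common to at least $d$ documents, contradicting maximality. This same argument also substantiates the paper's parenthetical remark: once every string of $\List$ comes from a single long document (unlike the $\ell=1$ case, where a string of $\List_1$ concatenates several short documents and an ``other string'' need not even exist), the branch-out case is unavoidable, so explicit nodes of the suffix tree of $S_k$ alone never need to be examined.
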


As before, we consider strings $S_k\in \List$ in order and check all candidates which are substrings of $S_k$ but not any $S_i$ for $i<k$. However, in order to make the algorithm efficient, we replace all strings $S_i$, including $S_k$, with strings $\r_k(S_i)$, each of length $\Oh(\tau)$. To define the mapping $\r_k$ we first introduce some necessary notions.

We say that $S[1..p]$ is a period of a string $S$ if $S[i]=S[i+p]$, $1 \le i \le |S|-p$. The length of the shortest period of $S$ is denoted as $\per(S)$. We say that a string $S$ is \emph{primitive} if its shortest period is not a proper divisor of $|S|$. Note that $\rho = S[1..\per(S)]$ is primitive and therefore satisfies the following lemma:
\begin{lemma}[Primitivity Lemma \cite{AlgorithmsOnStrings}]\label{lem:sync}
Let $\rho$ be a primitive string. Then $\rho$ has exactly two occurrences in a string $\rho\rho$.
\end{lemma}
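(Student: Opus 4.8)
The plan is to pin down the two obvious occurrences and rule out every other one. Writing $n=|\rho|$, the copies of $\rho$ starting at positions $1$ and $n+1$ of $\rho\rho$ are always present, so it suffices to show that no occurrence starts at an \emph{internal} position $p+1$ with $1\le p\le n-1$. I would argue by contradiction: assuming such an occurrence exists, I extract strong periodicity from it and collide it with primitivity.

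First I would translate the internal occurrence into two period conditions. An occurrence at position $p+1$ means $(\rho\rho)[p+k]=\rho[k]$ for all $1\le k\le n$. Splitting this sum at the boundary between the two copies, the indices with $p+k\le n$ give $\rho[k]=\rho[k+p]$ for $1\le k\le n-p$, so $p$ is a period of $\rho$; the indices with $p+k>n$, after substituting the wrapped character $(\rho\rho)[p+k]=\rho[p+k-n]$ and setting $i=p+k-n$, give $\rho[i]=\rho[i+(n-p)]$ for $1\le i\le p$, so $n-p$ is also a period. The delicate point here---and the step I expect to be the main obstacle---is exactly this index bookkeeping around the wrap point: one must verify that the two ranges produced coincide with the full ranges $1\le i\le n-p$ and $1\le i\le p$ demanded by the definition of a period, which is precisely what makes the doubled string $\rho\rho$, rather than a single $\rho$, carry the argument.

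With both $p$ and $n-p$ established as periods and $p+(n-p)=n\le n+\gcd(p,n-p)$, the Fine--Wilf periodicity lemma yields that $g=\gcd(p,n-p)=\gcd(p,n)$ is itself a period of $\rho$. Since $0<p<n$ we have $g\le p<n$ together with $g\mid n$, so $g$ is a \emph{proper} divisor of $n$. If one prefers to avoid citing Fine--Wilf, the same conclusion follows by observing that an internal occurrence is nothing but a cyclic rotation fixing $\rho$, and that the rotations fixing $\rho$ form a subgroup $g\mathbb{Z}/n\mathbb{Z}$ of $\mathbb{Z}/n\mathbb{Z}$ whose generator $g$ necessarily divides $n$.

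Finally I would close the contradiction. A period $g$ that properly divides $n$ forces $\rho=(\rho[1..g])^{n/g}$ with $n/g\ge 2$, i.e.\ $\rho$ is a proper power; hence its shortest period $\per(\rho)$ divides $g$ and is a proper divisor of $n$. This contradicts the hypothesis that $\rho$ is primitive, so no internal occurrence can exist and $\rho$ has exactly the two occurrences, at positions $1$ and $n+1$, in $\rho\rho$.
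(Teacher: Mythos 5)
The paper offers no proof of this lemma to compare against: it is imported verbatim from \cite{AlgorithmsOnStrings} as a known fact, so your self-contained argument is a genuine addition, and its core is correct. The translation of an internal occurrence at position $p+1$ into the two period conditions is done with the right index bookkeeping: positions before the wrap give that $p$ is a period of $\rho$, positions after the wrap give that $n-p$ is a period, and the strong (gcd) form of Fine--Wilf applies because $p+(n-p)=n\le n+\gcd(p,n-p)$, yielding the period $g=\gcd(p,n)$, a proper divisor of $n$. The one step you assert rather than prove is ``$\rho$ is a proper power; hence its shortest period $\per(\rho)$ divides $g$''. This implication is true but not automatic---in general the shortest period of a string does not divide its other periods (the string \str{aabaa} has shortest period $3$ and also period $4$)---and it matters here because the paper defines primitivity through the shortest period: you must show that $\per(\rho)$ itself is a proper divisor of $n$, not merely that some period is. The fix costs one more application of the tool you already use: with $q=\per(\rho)\le g$ we have $q+g\le 2g\le n$, so Fine--Wilf makes $\gcd(q,g)$ a period of $\rho$, which equals $q$ by minimality of $q$; hence $q\mid g\mid n$ and $q\le n/2<n$, contradicting primitivity. (Under the alternative standard definition of primitive, ``not equal to a proper power'', your proof would close immediately where you write ``$\rho$ is a proper power''.) The same remark applies to your rotation-subgroup variant: it elegantly avoids the first use of Fine--Wilf, but it too produces only \emph{some} period that properly divides $n$, and still needs this last bridge to the shortest period.
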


Let $Q_k = S_k[1+2\tau .. \ell]$; note that $|Q_k|=\ell-2\tau\ge 8\tau$.
Let $\per(Q_k)$ be the length of the shortest period $\rho$ of $Q$. 
 If $\per(Q_k) > 4\tau$, we define $Q_k' = \#$, where $\#$ is a special letter that does not belong to the main alphabet. 
Otherwise $Q_k$ can be represented as $\rho^t\rho'$, where $\rho'$ is a prefix of $\rho$. 
We set $Q_k'=\rho^{t'}\rho'$ for $t'\le t$ chosen so that $8\tau\le |Q_k'|<12\tau$.
For any string $S$ we define $\r_k(S)=\eps$ if $S$ does not contain $Q_k$, and a string obtained from $S$ 
by replacing the first occurrence of $Q_k$ with $Q'_k$ otherwise. Below we explain how to compute $Q'_k$.

\begin{lemma}
One can decide in linear time and constant space if $\per(Q_k) \leq 4 \tau$ and provided that this condition holds, compute $\per(Q_k)$.
\end{lemma}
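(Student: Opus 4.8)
The plan is to reduce the computation of the period to a single call of the constant-space, linear-time pattern matching algorithm, followed by one verification scan. Write $Q = Q_k$ and recall that $|Q| = \ell - 2\tau \ge 8\tau$. Set $P = Q[1..4\tau]$, the prefix of length $4\tau$; crucially we never materialise $P$, since the pattern matching routine accesses $P[i]$ simply as $Q[i]$, so only $\Oh(1)$ working space is used. First I would run the pattern matching algorithm to locate the leftmost occurrence of $P$ in $Q$ at a position strictly greater than $1$ (position $1$ is always an occurrence, as $P$ is a prefix). If this occurrence starts at some position $p+1$ with $p \le 4\tau$, we keep $p$ as a candidate period; if $P$ has no occurrence starting in $[2,\,4\tau+1]$, we immediately report $\per(Q) > 4\tau$.

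The heart of the argument is the claim that, whenever $\per(Q) \le 4\tau$, this leftmost occurrence lies exactly at position $\per(Q)+1$, so that the candidate $p$ equals $\per(Q)$. One direction is easy: since $Q$ has period $\per(Q) \le 4\tau$ and $\per(Q)+4\tau \le 8\tau \le |Q|$, the prefix $P$ recurs at position $\per(Q)+1$. The delicate direction is to exclude any occurrence of $P$ at a position $q+1$ with $q < \per(Q)$. Such an occurrence forces the prefix $Q[1..q+4\tau]$ to have period $q$; but this prefix also inherits the period $\per(Q)$ from $Q$, and because $\per(Q) \le 4\tau$ we have $q + \per(Q) \le q + 4\tau$, so the Fine--Wilf periodicity lemma yields that the prefix has period $g = \gcd(q, \per(Q))$. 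Since $g$ divides $\per(Q)$ and the period $g$ holds on a prefix of length at least $\per(Q)$, the global period $\per(Q)$ of $Q$ collapses down to $g < \per(Q)$, contradicting minimality of $\per(Q)$. I expect this Fine--Wilf step --- separating a genuine global period from a merely local periodicity of the prefix --- to be the main obstacle.

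Given the claim, correctness follows by verifying the single candidate. Having obtained $p \le 4\tau$, I would test in one left-to-right scan whether $Q[i] = Q[i+p]$ for all $1 \le i \le |Q|-p$, i.e.\ whether $p$ is a true period of $Q$. If the test succeeds then $\per(Q) \le p \le 4\tau$, and by the claim the leftmost occurrence already pinpoints $\per(Q)$, so in fact $p = \per(Q)$ and we output it. If the test fails then $p$ is not a period; yet if $\per(Q)$ were at most $4\tau$, the claim would force the smallest candidate to equal the genuine period $\per(Q)$ and the test would have succeeded --- hence $\per(Q) > 4\tau$, and we report so. Thus inspecting only the smallest candidate is sufficient. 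Both the pattern matching call and the verification scan run in $\Oh(|Q|) = \Oh(\ell)$ time using $\Oh(1)$ working space, which establishes the lemma.
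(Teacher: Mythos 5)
Your algorithm is exactly the paper's: find, with a constant-space pattern matcher, the first occurrence after position $1$ of a long prefix of $Q_k$ inside $Q_k$, take its starting position minus one as the unique candidate period, and confirm the candidate with a letter-by-letter scan, reporting $\per(Q_k)>4\tau$ if the candidate exceeds $4\tau$ or fails verification. The paper takes the prefix of length $\ceil{|Q_k|/2}$ where you take length $4\tau$; both work, since both lengths are at least $\per(Q_k)$ in the relevant case and the shifted copy still fits inside $Q_k$. Where you genuinely diverge is in the proof of the key claim that, when $\per(Q_k)\le 4\tau$, no occurrence of the prefix can start at a position $q+1$ with $q<\per(Q_k)$. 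The paper gets the contradiction from its Primitivity Lemma (\autoref{lem:sync}): such an occurrence would place the primitive string $\rho=Q_k[1..\per(Q_k)]$ inside $\rho\rho$ at a third position. You instead invoke Fine--Wilf: the spurious occurrence gives the prefix $Q_k[1..q+4\tau]$ the two periods $q$ and $\per(Q_k)$, whose sum is at most the prefix length, so the prefix has period $g=\gcd(q,\per(Q_k))$; you then propagate this divisor period to all of $Q_k$ (a prefix of length at least $\per(Q_k)$ with a period $g$ dividing $\per(Q_k)$ forces $Q_k$ itself to have period $g$), contradicting minimality. Both arguments are sound. The propagation step, which you state tersely, does hold and is the one thing your route needs beyond Fine--Wilf itself: reduce any two positions of $Q_k$ at distance $g$ modulo $\per(Q_k)$ into the prefix, where they land at positions congruent modulo $g$ and hence carry equal letters. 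The paper's route is shorter in context because \autoref{lem:sync} is stated anyway and reused in \autoref{fct:replace}, whereas yours is self-contained given Fine--Wilf; either proof supports the stated time and space bounds.
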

\begin{proof}
Let $P$ be the prefix of $Q_k$ of length $\ceil{|Q_k|/2}$ and $p$ be the starting position of the second occurrence of $P$ in $Q_k$, if any. The position $p$ can be found in $\Oh(|Q_k|)$ time by a constant-space pattern matching algorithm.

We claim that if $\per(Q_k) \leq 4\tau \leq \ceil{|Q_k|/2}$, then $p = \per(Q_k)+1 $. Observe first that in this case $P$ occurs at a position $per(Q_k)+1$, and hence $p \le \per(Q_k)+1$. Furthermore, $p$ cannot be smaller than $\per(Q_k)+1$, because otherwise $\rho = Q_k[1..\per(Q_k)]$ would occur in $\rho\rho=Q_k[1..2\per(Q_k)]$ at the position $p$. The shortest period $\rho$ is primitive, so this is a contradiction with \autoref{lem:sync}. 

The algorithm compares $p$ and $4 \tau + 1$. If $p \le  4 \tau + 1$, it uses letter-by-letter comparison to determine whether $Q_k[1..p-1]$ is a period of $Q_k$. If so, by the discussion above $\per(Q_k)=p-1$, and the algorithm returns it. 
Otherwise $\per(Q_k)>4\tau$.
The algorithm runs in $\Oh(|Q_k|)$ time and uses constant space.
\qed
\end{proof}

\begin{fact}\label{fct:replace}
Suppose that a string $S$, $|S| \le |Q_k|+4\tau$, contains $Q_k$ as a substring.
Then
\begin{enumerate}[(a)]\parsep=0pt
  \item\label{it:there} replacing with $Q_k'$ any occurrence of $Q_k$ in $S$ results in $\r_k(S)$,
  \item\label{it:back} replacing with $Q_k$ any occurrence of $Q'_k$ in $\r_k(S)$ results in $S$.
\end{enumerate}
\end{fact}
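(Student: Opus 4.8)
The plan is to reduce both parts to a single structural statement: under the hypothesis $|S|\le|Q_k|+4\tau$, either $Q_k$ occurs exactly once in $S$, or all of its occurrences lie inside one maximal run of period $\per(Q_k)$ and are mutually aligned to that period. The length bound drives everything. If $Q_k$ occurred at two positions $i<j$ of $S$, then $j\le|S|-|Q_k|+1\le 4\tau+1$, so the two occurrences start within distance $j-i\le 4\tau$ of each other; since $4\tau<8\tau\le|Q_k|$ they overlap, and hence $j-i$ is a period of $Q_k$. I would therefore split on whether $\per(Q_k)>4\tau$ or $\per(Q_k)\le 4\tau$, matching the two branches in the definition of $Q_k'$.

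In the non-periodic branch ($\per(Q_k)>4\tau$, so $Q_k'=\#$), the period $j-i\le 4\tau<\per(Q_k)$ just produced contradicts minimality of $\per(Q_k)$, so $Q_k$ occurs exactly once in $S$. Then part~(\ref{it:there}) is immediate, since ``any occurrence'' is ``the only occurrence'', which is the first one defining $\r_k(S)$. For part~(\ref{it:back}), the letter $\#$ lies outside the main alphabet while $S$ is over the main alphabet, so $Q_k'=\#$ occurs exactly once in $\r_k(S)$, at the site of the deleted $Q_k$; replacing it by $Q_k$ literally undoes the forward substitution and returns $S$.

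In the periodic branch ($\per(Q_k)\le 4\tau$) set $p=\per(Q_k)$ and let $\rho=Q_k[1..p]$ be the primitive shortest period; here $|Q_k|\ge 8\tau\ge 2p$. Using \autoref{lem:sync} (or the Fine--Wilf periodicity argument) one shows that any period $d\le 4\tau$ of $Q_k$ must be a multiple of $p$, because $\gcd(d,p)$ is then again a period of $Q_k$ and hence $\ge p$. Thus all occurrences of $Q_k$ in $S$ sit at positions congruent modulo $p$ inside a single maximal $p$-periodic run $S[a..b]$. Replacing the occurrence at position $i+jp$ by $Q_k'=\rho^{t'}\rho'$ removes exactly $(t-t')$ copies of $\rho$; a short phase-tracking computation, using $|Q_k'|\equiv|Q_k|\equiv|\rho'|\pmod p$, shows the result is the unchanged prefix $S[1..a-1]$, followed by the $p$-periodic string of length $(b-a+1)-(t-t')p$ carrying the phase of position $a$, followed by the unchanged suffix $S[b+1..|S|]$~--- a description independent of $j$. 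Hence every occurrence yields the same string, namely $\r_k(S)$, giving part~(\ref{it:there}). Part~(\ref{it:back}) is the mirror image: since $|\r_k(S)|=|S|-(t-t')p\le|Q_k'|+4\tau$ and $|Q_k'|\ge 8\tau$, the same overlap-plus-periodicity reasoning places all occurrences of $Q_k'$ in $\r_k(S)$ in one $p$-run aligned modulo $p$, and replacing any of them by $Q_k$ inserts $(t-t')$ copies of $\rho$ in a position-independent way; as the replacement at the original site manifestly recovers $S$, so does every other.

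The main obstacle I anticipate is the periodic branch, specifically the passage from ``$Q_k$ occurs twice'' to ``all occurrences are aligned modulo $p$ within one run''. This is exactly where the three quantitative hypotheses $\per(Q_k)\le 4\tau$, $|Q_k|\ge 8\tau$, and $|S|\le|Q_k|+4\tau$ must be combined so that Fine--Wilf applies and the candidate period is forced to be a multiple of $p$. Once that alignment is established, the position-independence of deleting or inserting whole periods in a periodic run is routine, if somewhat fiddly in the index bookkeeping.
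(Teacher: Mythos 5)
Your proof is correct and follows essentially the same route as the paper's: bound the offset between any two occurrences by $|S|-|Q_k|\le 4\tau$, split on whether $\per(Q_k)>4\tau$ (forcing a unique occurrence) or $\per(Q_k)\le 4\tau$ (forcing all occurrences to be aligned to the period inside a single run, so that every replacement yields the same string), and then mirror the argument with $Q_k'$ inside $\r_k(S)$ for part (b). The only cosmetic difference is that you invoke the Fine--Wilf gcd argument to force alignment modulo $\per(Q_k)$, whereas the paper applies the Primitivity Lemma directly to occurrence positions of $\rho$ in $\rho^t\rho'$; these are interchangeable here.
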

\begin{proof}
We start with (\ref{it:there}).
Let $i$ and $i'$ be the positions of the first and last occurrence of $Q_k$ in $S$.
We have $1 \le i \le i'\le |S|-|Q_k|+1$, so $i'-i \le |S|-|Q_k|\le 4\tau$.
If $\per(Q_k) > 4\tau$ this implies that $i' - i = 0$, or, in other words,
that $Q_k$ has just one occurrence in $S$.

On the other hand, if $\per(Q_k) \le 4\tau$, we observe that $i'-i \le 4\tau = 8\tau - 4\tau \le |Q_k|-\per(Q_k)$.
Therefore the string $\rho=S[i'..i'+\per(Q_k)-1]$ fits within $Q_k=S[i..i+|Q_k|-1]$. It is primitive and
\autoref{lem:sync} implies that $\rho$ occurs in $\rho^t\rho'$ only $t$ times,
so $i'=i+j \cdot per(Q_k)$ for some integer $j\le t$.
Therefore all occurrences of $Q_k$ lie in the substring of $S$ of the form $\rho^s\rho'$ for some $s\ge t$.
Thus, replacing any of these occurrences with $Q'_k$ leads to the same result, $\r_k(S)$.

Now, let us prove (\ref{it:back}).
Note that if we replace an occurrence of $Q'_k$ in $\r_k(S)$ with $Q_k$,
by (\ref{it:there}) we obtain a string $S'$ such that $\r_k(S')=\r_k(S)$. Moreover all such strings $S'$
can be obtained by replacing some occurrence of $Q'_k$, in particular this is true for~$S$.

If $\per(Q_k) > 4\tau$, since $\#$ does not belong to the main alphabet, $Q'_k$ has exactly one occurrence in $\r_k(S)$
and the statement holds trivially. For the other case we proceed as in the proof of (\ref{it:there})
showing that all occurrences of $Q'_k$ are in fact substrings of a longer substring of $S$ of the form $\rho^{s'}\rho'$
for some $s'\ge t'$.\qed
\end{proof}

\begin{lemma}\label{lem:replace}
Consider strings $P$ and $S$, such that $|S|\le |Q_k|+4\tau$ and $P$ contains $Q_k$ as a substring.
Then $P$ occurs in $S$ at position $p$ if and only if $\r_k(P)$ occurs in $\r_k(S)$ at position $p$. 
\end{lemma}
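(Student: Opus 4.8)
The plan is to derive both implications from \autoref{fct:replace}, exploiting that the single edit performed by $\r_k$ (replacing one occurrence of $Q_k$ by $Q_k'$, or vice versa) is \emph{local}: it leaves everything to the left of the touched occurrence unchanged, and therefore preserves the starting position $p$ of an embedded copy. First I would settle the length bounds so that \autoref{fct:replace} is applicable to $P$ as well as to $S$. In the forward direction, if $P$ occurs in $S$ then $|P| \le |S| \le |Q_k| + 4\tau$, so both hypotheses hold. In the backward direction I would observe that $\r_k$ changes the length by exactly $|Q_k'|-|Q_k|$ on each side, so $|\r_k(P)| = |P| - |Q_k| + |Q_k'|$ and likewise for $S$; hence $\r_k(P)$ occurring in $\r_k(S)$ forces $|\r_k(P)| \le |\r_k(S)|$, which simplifies to $|P| \le |S| \le |Q_k|+4\tau$. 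Thus the hypotheses of \autoref{fct:replace} are available for $P$ in both directions.

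For the forward implication, suppose $S[p..p+|P|-1] = P$ and fix a position $q$ at which $Q_k$ occurs in $P$; then $Q_k$ occurs in $S$ at position $p+q-1$, sitting inside the copy of $P$. Writing $S = X\,P\,Y$ with $|X| = p-1$, I would replace this particular occurrence of $Q_k$ in $S$ by $Q_k'$. This edit leaves $X$ and $Y$ intact and turns the middle block $P$ into $\r_k(P)$ by \autoref{fct:replace}(a) applied to $P$; by \autoref{fct:replace}(a) applied to $S$ the same edit yields $\r_k(S)$. Hence $\r_k(S) = X\,\r_k(P)\,Y$, so $\r_k(P)$ occurs in $\r_k(S)$ at position $|X|+1 = p$.

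The backward implication is symmetric via \autoref{fct:replace}(b). Given $\r_k(S) = X'\,\r_k(P)\,Y'$ with $|X'| = p-1$, I would note that $\r_k(P)$ contains $Q_k'$ (inserted in place of $Q_k$), pick such an occurrence, and view it as an occurrence of $Q_k'$ in $\r_k(S)$ lying inside the copy of $\r_k(P)$. Replacing it by $Q_k$ turns $\r_k(P)$ back into $P$ by \autoref{fct:replace}(b) applied to $P$, and simultaneously turns $\r_k(S)$ back into $S$ by \autoref{fct:replace}(b) applied to $S$. Therefore $S = X'\,P\,Y'$, and $P$ occurs in $S$ at position $p$.

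The one genuine subtlety, which I would flag explicitly, is the interplay between the \emph{global} guarantee of \autoref{fct:replace} (``replacing \emph{any} occurrence yields the canonical result'') and the need to edit a \emph{specific}, aligned occurrence. Precisely because $\r_k(S)$ does not depend on which occurrence of $Q_k$ is replaced, I am free to choose the occurrence sitting inside the embedded copy of $P$ in the forward case, and the occurrence of $Q_k'$ inside the copy of $\r_k(P)$ in the backward case; the untouched prefix of length $p-1$ then certifies that the starting position is preserved. Everything else is routine bookkeeping of substring indices.
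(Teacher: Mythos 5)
Your proof is correct and follows essentially the same route as the paper's: both directions are derived from \autoref{fct:replace}, using its ``any occurrence yields the canonical result'' guarantee to edit the aligned occurrence of $Q_k$ (resp.\ $Q_k'$) inside the embedded copy, so that the untouched prefix preserves the position $p$. Your explicit verification that the length hypothesis $|P|\le |Q_k|+4\tau$ also holds in the backward direction is a nice touch that the paper leaves implicit.
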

\begin{proof}
First, assume that $P$ occurs in $S$ at a position $p$. 
This induces an occurrence of $Q_k$ in $S$ within the occurrence of $P$, 
and replacing this occurrence of $Q_k$ with $Q'_k$ gives $\r_k(S)$ by \autoref{fct:replace}(\ref{it:there}).
This replacement also turns the occurrence of $P$ at the position $p$ into an occurrence of $\r_k(P)$.

Now, assume $\r_k(P)$ occurs in $\r_k(S)$ at the position $p$.
Since $\r_k(P) \ne \eps$, this means that $\r_k(S) \ne \eps$ and that
 $Q'_k$ occurs in $\r_k(S)$ (within the occurrence of $\r_k(P)$).
By \autoref{fct:replace}(\ref{it:back}) replacing this occurrence of $Q'_k$ with $Q_k$ 
turns $\r_k(S)$ into $S$ and the occurrence of $\r_k(P)$ at the position $p$
into an occurrence of $P$.
\qed
\end{proof}

Observe that applied for $S=S_k$, \autoref{lem:replace} implies that $\r_k$
gives a bijection between substrings of $S_k$ of length $\ge \ell=|Q_k|+2\tau$
and substrings of $\r_k(S_k)$ of length $\ge |Q'_k|+2\tau$.
Moreover, it shows that any substring of $S_k$
of length $\ge \ell$ occurs in $S_i$ iff the corresponding substring of $\r_k(S_k)$ occurs in $\r_k(S_i)$.

This lets us apply the technique described in the previous section to find $\lcs$ provided that it occurs in $S_k$
but not $S_i$ with $i<k$. Strings $\r_k(S_i)$ are computed in parallel with a constant-space pattern matching algorithm for a pattern $Q_k$ in the documents of length $\ell$ or more, which takes $\Oh(n)$ time in total. The list $\List$ is composed $r_k(S_i)$ obtained from long documents, and we use \autoref{lem:tau-candidates-2} to compute the number of documents each candidates occurs in.
%
%

Compared to the arguments of the previous section, we additionally exclude nodes of depth less than $|Q'_k|+2\tau$ to make sure that each marked node is indeed $\r_k(P)$ for some substring $P$ of $S_k$ of length at least $\ell=|Q_k|+2\tau$.
This lets us use the amortization by the number of explicit nodes in the generalized suffix tree of $T_1,\ldots,T_m$.
More precisely, if a node with label $\r_k(P)$ is marked, we charge $P$, which is guaranteed to be explicit in the generalized suffix tree.
This implies $\Oh(n^2 / \tau)$-time and $\Oh(\tau)$-space bounds.

\subsection{Large alphabets}\label{ss:large}
In this section we describe how to adapt our solution so that it works for alphabets of size $n^{\Oh(1)}$.
Note that we have used the constant-alphabet assumption only to make sure that suffix trees can be efficiently constructed. If the alphabet is not constant,
a suffix tree of a string can be constructed in linear time plus the time of sorting its letters~\cite{DBLP:conf/focs/Farach97}.
If $\tau > \sqrt{n}$, the size of the alphabet is $n^{\Oh(1)} = \tau^{\Oh(1)}$ and hence any suffix tree used by the algorithm can be constructed in $\Oh(\tau)$ time.

Suppose now that $\tau \le \sqrt{n}$ and $\ell=1$. Our algorithm uses suffix trees in a specific pattern: in a single phase it builds the suffix tree of $S_k$
and then constructs the generalized suffix tree of $S_k$ and $S_i$ for each~$i$. Note that the algorithm only needs information about the nodes of the suffix tree of $S_k$, the nodes where suffixes of $S_i \in \mathcal{L}$ branch out, and leaves of the generalized suffix tree.
None of these changes if we replace each letter of $\Sigma$ occurring in $S_i$, but not in $S_k$, with a special letter which does not belong to $\Sigma$.

Thus our approach is as follows: first we build a deterministic dictionary, mapping letters of $S_k$ to integers of magnitude $\Oh(|S_k|)=\Oh(\tau)$
and any other letter of the main alphabet to the special letter. The dictionary can be constructed in $\Oh(\tau \log^2\log\tau)$ time~\cite{DBLP:conf/icalp/Ruzic08,DBLP:journals/jal/Han04}.
Then instead of building the generalized suffix tree of $S_k$ and $S_i$ we build it for the corresponding strings with letters mapped using the dictionary. 
In general, when $\ell$  is large, we apply the same idea with $r_k(S_k)$ and $r_k(S_i)$ instead of $S_k$ and $S_i$ respectively. 

In total, the running time is $\Oh(n^2/\tau + n\log^2\log \tau)$. For $\tau \le \sqrt{n}$ the first
term dominates the other, i.e. we obtain an $\Oh(n^2/\tau)$-time solution.

\begin{oneshot}{\autoref{thm:upperbound}}
There is an algorithm that given a parameter $\tau$, $1\le \tau \le n$, computes $\lcs$ in $\Oh(n^2/\tau)$ time
using $\Oh(\tau)$ space.
\end{oneshot}

\section{A Time-Space Trade-Off Lower Bound}
Given $n$ elements over a domain $D$, the \emph{element distinctness problem} is to decide whether all $n$ elements are distinct. Beame~et~al.~\cite{DecisionProblemsRandomized} showed that if $|D| \ge n^2$, then any RAM algorithm solving the element distinctness problem in $\tau$ space, must use at least $\Omega (n \sqrt{\log (n/(\tau \log n))/ \log\log (n/(\tau \log n))})$ time.\footnote[1]{Note that in \cite{DecisionProblemsRandomized,DBLP:journals/siamcomp/BorodinC82} the space consumption is measured in bits. The version of RAM used there is unit-cost with respect to time and log-cost with respect to space.}

The element distinctness (ED) problem can be seen as a special case of the LCS problem where we have $m=n$ documents of length $1$ and want to find the longest string common to at least $d=2$ documents. Thus, the lower bound for ED also holds for this rather artificial case of the LCS problem. Below we show that the same bound holds with just $m=2$ documents. The main idea is to show an analogous bound for a two-dimensional variant of the element distinctness problem, which we call the \emph{element bidistinctness problem}. The LCS problem on two documents naturally captures this problem. The steps are similar to those for the ED lower bound by Beame~et~al. \cite{DecisionProblemsRandomized}, but the details differ. We start by introducing the necessary definitions of branching programs and embedded rectangles. We refer to \cite{DecisionProblemsRandomized} for a thorough overview of this proof technique.


\paragraph{Branching Programs.}
A $n$-variate branching program $\Program$ over domain $D$ is an acyclic directed graph with the following properties:
(1) there is a unique source node denoted $s$, (2) there are two sink nodes, one labelled by $0$ and one labelled by $1$, (3) each nonsink node $v$ is assigned an index $i(v) \in [1,n]$ of a variable, and (4) there are exactly $|D|$ arcs out of each nonsink node, labelled by distinct elements of~$D$.
A branching program is executed on an input $x \in D^n$ by starting at $s$, reading the variable $x_{i (s)}$ and following the unique arc labelled by $x_{i(s)}$. This process is continued until a sink is reached and the output of the computation is the label of the sink. For a branching program $\Program$, we define its \emph{size} as the number of nodes, and its \emph{length} as the length of the longest path from $s$ to a sink node.

\begin{lemma}[see page~2 of \cite{DBLP:journals/siamcomp/BorodinC82}]
If $f:D^n \rightarrow \{0,1\}$ has a word-RAM algorithm with running time $T(n)$ using $S(n)$ $w$-bit words,
then there exists an $n$-variate branching program $\Program$ over $D$ computing $f$, of length $O(T(n))$ and size~$2^{O(wS(n)+ \log n)}$.
\end{lemma}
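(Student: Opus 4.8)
The plan is to carry out the standard Borodin--Cook simulation, in which each node of the branching program encodes a complete instantaneous description (configuration) of the word-RAM and each level of the program corresponds to one time step of its execution. First I would fix the given word-RAM algorithm computing $f$ and define a \emph{configuration} to consist of the contents of the $S(n)$ working words (each holding $w$ bits) together with the finite internal control state, i.e.\ the index of the instruction about to be executed. Crucially, the input $x \in D^n$ is treated as read-only and is \emph{not} part of a configuration; the only way a configuration depends on the input is through the values of variables already read, and those are recorded in the working memory. Since the algorithm is deterministic, a configuration $C$ determines (i) whether the next executed instruction reads an input variable and, if so, which index $j \in [1,n]$ it reads, and (ii) the successor configuration as a deterministic function of the value returned by that read.

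Next I would build $\Program$ level by level, with level $t$ holding one node per configuration reachable after $t$ executed steps. A node $v$ at level $t$ whose configuration is about to read $x_j$ is assigned index $i(v) = j$ and given $|D|$ outgoing arcs, the arc labelled $a \in D$ leading to the level-$(t+1)$ node for the successor configuration obtained by setting $x_j = a$. For an internal (non-reading) step, I would still respect the branching-program syntax by letting the node query an arbitrary fixed variable with all $|D|$ arcs pointing to the single deterministic successor, so the computation path faithfully tracks the RAM. The source $s$ is the node for the initial configuration, and the two sinks are the merged accepting and rejecting halting configurations labelled $1$ and $0$. Running $\Program$ on $x$ then reproduces exactly the RAM's execution, so $\Program$ computes $f$.

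It remains to verify the length and size bounds. Since each level corresponds to one RAM step and the RAM halts within $T(n)$ steps, every source-to-sink path has length $O(T(n))$, and acyclicity is automatic from the leveled structure (arcs go from level $t$ to level $t+1$). For the size, the number of configurations per level is at most the number of working-memory states times the number of control states, i.e.\ $2^{wS(n)} \cdot O(1) = 2^{O(wS(n))}$, and there are $T(n)+1$ levels, giving total size $(T(n)+1)\cdot 2^{O(wS(n))} = 2^{O(wS(n)+\log T(n))}$. Using $T(n) = n^{O(1)}$ (so that $\log T(n) = O(\log n)$), this is $2^{O(wS(n)+\log n)}$ as claimed.

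The main obstacle is not any single hard step but getting the bookkeeping exactly right: I must ensure the configuration captures \emph{precisely} enough state that the next queried index and the transition are well-defined deterministic functions of it (so that indirect or computed addressing into the input is handled, since the target index is itself stored in memory), while capturing \emph{only} the mutable $wS(n)$ bits so that the input does not inflate the count. The second delicate point is attributing the additive $\log n$ term correctly: it arises from the $T(n)+1$ levels through $\log T(n) = O(\log n)$ (equivalently, one may merge identical configurations, whose determinism and termination guarantee acyclicity, and charge $\log n$ to a non-uniform program of $n^{O(1)}$ lines), rather than from the memory contents. I would finally double-check that merging halting configurations into two sinks preserves both correctness and acyclicity.
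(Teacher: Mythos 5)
The paper does not actually prove this lemma---it is quoted from Borodin and Cook \cite{DBLP:journals/siamcomp/BorodinC82}---so there is no internal proof to compare against. Your leveled simulation (configurations $=$ contents of the $S(n)$ working words plus the finite control state, one level per time step, dummy reads for non-reading steps, merged accepting/rejecting sinks, indirect addressing handled because the queried index is a function of the configuration) is exactly the standard argument behind that citation, and the construction itself is sound.

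There is, however, one genuine gap: your size bound is $(T(n)+1)\cdot 2^{O(wS(n))}=2^{O(wS(n)+\log T(n))}$, and to turn $\log T(n)$ into $O(\log n)$ you \emph{assume} $T(n)=n^{O(1)}$. That assumption is not part of the hypothesis; the lemma must hold for every word-RAM algorithm, with no a priori bound on $T(n)$. The fix is short and uses exactly the determinism and termination you already have on hand: on any fixed input, a halting execution never repeats a configuration (a repeated configuration on the same read-only input would force an infinite loop), so $T(n)\le O(2^{wS(n)})$, hence $\log T(n)=O(wS(n))$ and the total size is $2^{O(wS(n))}\le 2^{O(wS(n)+\log n)}$ with no polynomial-time assumption. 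Separately, your parenthetical alternative---merging identical configurations across time levels and asserting that ``determinism and termination guarantee acyclicity''---is false. The merged configuration graph can contain cycles whose arcs require mutually inconsistent input values, e.g.\ $C_1\to C_2$ on $x_1=a$ and $C_2\to C_1$ on $x_1=b$ with $a\ne b$: no single execution ever traverses such a cycle, so the machine halts on every input, yet the graph is not acyclic and so is not a legal branching program. The leveled construction is the correct way to guarantee acyclicity; keep it, and repair the $\log T(n)$ step as above.
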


\paragraph{Embedded Rectangles.}
If $A \subseteq [1,n]$, a point $\tau \in D^A$ (i.e. a function $\tau:A\to D$) is called a \emph{partial input} on $A$. If $\tau_1, \tau_2$ are partial inputs on $A_1, A_2 \in [1,n]$, $A_1 \cap A_2 = \emptyset$, then $\tau_1\tau_2$ is the partial input on $A_1 \cup A_2$ agreeing with $\tau_1$ on $A_1$ and with $\tau_2$ on $A_2$. For sets $B\subseteq D^{[1,n]}$ and $A\subseteq [1,n]$ we define $B_{A}$, the \emph{projection} of $B$ onto $A$, as the set of all partial inputs on $A$ which agree with some input in $B$.
 An embedded rectangle $R$ is a triple $(B, A_1, A_2)$, where $A_1$ and $A_2$ are disjoint subsets of $[1,n]$, and $B\subseteq D^{[1,n]}$ satisfies: (i) $B_{[1,n] \setminus {A_1 \cup A_2}}$ consists of a single partial input $\sigma$,  (ii) if $\tau_1 \in B_{A_1}$, and $\tau_2 \in B_{A_2}$, then $\tau_1\tau_2\sigma \in B$. For an embedded rectangle $R = (B, A_1, A_2)$, and $j \in \{1,2\}$ we define:
\begin{align*}
m_j(R) &= |A_j|  & m(R) &= \min(m_1(R), m_2(R)) \\
\alpha_j(R) &= |B_{A_j}| / |D|^{|A_j|} & \alpha(R) &= \min(\alpha_1(R), \alpha_2(R))
\end{align*}
Given a small branching program $\Program$ it can be shown that $\Program^{-1}(1)$,
the set of all YES-inputs, contains a relatively large embedded rectangle. Namely,

\begin{lemma}[Corollary 5.4 (i)~\cite{DecisionProblemsRandomized}] 
\label{lm:largeR}
	Let $k \ge 8$ be an integer, $q \le 2^{-40}k^{-8}$, $n \ge r \ge q^{-5k^2}$. Let $\Program$ be a $n$-variate branching program over domain $D$ of length at most $(k-2)n$ and size~$2^S$.
	Then there is an embedded rectangle $R$ contained in $\Program^{-1}(1)$ satisfying $m(R) = m_1(R) = m_2(R) \ge q^{2k^2}n/2$ and $\alpha(R) \ge 2^{-q^{1/2}m(R)-Sr}|\Program^{-1}(1)|/|D^n|$.
\end{lemma}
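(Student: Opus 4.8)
The plan is to follow the Borodin--Cook rectangle-extraction paradigm in the refined form of Beame et al., reconstructing their argument rather than inventing a new one. The first step exploits that a branching program of length at most $(k-2)n$ can be sliced, along the time axis, into $r$ consecutive epochs. For an accepting input I record its \emph{trace}: the sequence of $r+1$ program nodes visited at the epoch boundaries. Since $\Program$ has only $2^S$ nodes, there are at most roughly $2^{Sr}$ possible traces, so an averaging argument over $\Program^{-1}(1)$ produces a single trace $\pi$ followed by at least a $2^{-Sr}$ fraction of the accepting inputs. This is exactly where the $2^{-Sr}$ factor in the lower bound on $\alpha(R)$ originates. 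The requirement $r \ge q^{-5k^2}$ forces the epochs to be short --- each reads only about $(k-2)n/r$ variables --- which is what later makes the variable bookkeeping tractable; the requirement $n \ge r$ keeps the epochs nonempty on average.

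Fixing the trace $\pi$ pins the entry and exit node of every epoch, so that consistency with $\pi$ decomposes epoch by epoch: an input follows $\pi$ iff, in each epoch, starting from that epoch's fixed entry node and answering queries by the input, it reaches the fixed exit node. I would classify each index $i \in [1,n]$ by the epoch in which it is first queried and then choose two epochs, say $i_1$ and $i_2$, whose designated variables will form $A_1$ and $A_2$; every other variable is frozen to the common value $\sigma$ extracted from $\pi$. The embedded-rectangle product property (ii) is then meant to follow because an assignment to the $A_1$-variables influences only the epoch-$i_1$ segment and an assignment to the $A_2$-variables only the epoch-$i_2$ segment, so the two may be combined freely without disturbing $\pi$. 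The genuine subtlety here is that in a branching program the set of variables actually queried is itself input-dependent; to obtain honest independence one must restrict $B$ to inputs whose query pattern inside epochs $i_1$ and $i_2$ is fixed and query-disjoint across the two epochs, and this restriction is one source of the $2^{-q^{1/2}m(R)}$ loss.

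The last step is quantitative balancing. I must select $i_1$ and $i_2$ so that $m_1(R)=m_2(R)=m(R)\ge q^{2k^2}n/2$ and so that both projections $\alpha_1(R),\alpha_2(R)$ remain large. Because there are only boundedly many epochs but a constant fraction of all $n$ variables must be first-queried somewhere, a pigeonhole/averaging step over the pairs of epochs yields two groups that are simultaneously balanced and sizeable, at the cost of a further controlled shrinkage of the surviving input fraction. The main obstacle is precisely this simultaneous optimisation: one must exhibit $A_1,A_2$ that are at once (a) large and balanced, (b) of large projection $\alpha_j$, and (c) genuinely non-interacting, while keeping all three losses --- the $2^{-Sr}$ trace-fixing loss, the $2^{-q^{1/2}m(R)}$ restriction loss, and the balancing loss --- inside the stated bound. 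It is this competing bookkeeping, rather than any single conceptual leap, that is hard, and the peculiar constraints $q\le 2^{-40}k^{-8}$ and $r\ge q^{-5k^2}$ are calibrated exactly so that the three losses compose favourably.
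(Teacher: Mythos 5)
First, a point of reference: the paper does not prove this lemma at all --- it is imported verbatim, label and all, as Corollary 5.4(i) of Beame et al.~\cite{DecisionProblemsRandomized} and used as a black box. So your reconstruction can only be measured against that source. You have correctly identified its paradigm: slicing the program into epochs, fixing the trace of nodes at epoch boundaries (this is indeed where the $2^{-Sr}$ factor comes from), and extracting an embedded rectangle from the inputs following one popular trace.

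There is, however, a step in your plan that fails quantitatively, and it is the crux. You take $A_1$ and $A_2$ to be the variables designated to two \emph{single} epochs $i_1$ and $i_2$. The program has length at most $(k-2)n$ and you slice it into $r$ epochs, so a single epoch queries at most $(k-2)n/r$ variables; since the hypothesis forces $r \ge q^{-5k^2}$, this is at most $(k-2)q^{5k^2}n$. But the lemma demands $m(R) \ge q^{2k^2}n/2$, which exceeds that bound by a factor of order $q^{-3k^2}$ --- astronomically large given $q \le 2^{-40}k^{-8}$. So no choice of two single epochs can ever meet the size requirement; shorter epochs (large $r$) buy a cheaper trace but make per-epoch variable sets tiny, and your sketch cannot escape this tension. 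In the actual argument of Beame et al., $A_1$ and $A_2$ consist of variables \emph{all} of whose reads fall into two disjoint \emph{unions} of many epochs, selected by a probabilistic argument (a random assignment of epochs to the two sides), and the exponent $2k^2$ arises as the probability that every read of a given rarely-read variable lands in the chosen side. Relatedly, classifying a variable by the epoch of its \emph{first} query cannot yield property (ii) of an embedded rectangle: if an $A_1$-variable is read again in some third epoch, varying it perturbs the computation there, and your proposed repair (fixing the query pattern inside epochs $i_1$ and $i_2$) does not address these out-of-epoch reads at all. Finally, the ``competing bookkeeping'' you explicitly defer is not a technicality to be waved at --- it is the content of the lemma; as written, the proposal is a plan whose hardest steps are acknowledged rather than carried out.
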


\paragraph{Element Bidistinctness.}
We say that two elements $x = (x_1, x_2)$ and $y = (y_1, y_2)$ of the Cartesian product $D \times D$ are \emph{bidistinct} if both $x_1 \neq y_2$ and $x_2 \neq y_1$. The element bidistinctness function $EB:(D\times D)^{n}\to \{0,1\}$ is defined to be~$1$ iff for every pair
of indices $1\le i,j\le n$ the $i$-th and $j$-th pair are bidistinct. Note that computing EB for $(s_1,t_1),\ldots,(s_{n},t_{n})$ is equivalent
to deciding if $LCS(s_1\ldots s_{n}, t_1\ldots t_{n})\ge 1$. Thus the problem of computing the longest common substring of two strings over $\Sigma =D$ is at least as hard as the EB problem.
Below we show a time-space trade-off lower bound for element bidistinctness.

\begin{lemma}
\label{lm:EBisBIG}
	If $|D| \ge 2n^2$, at least a fraction $1/e$ of inputs belong to $EB^{-1}(1)$.
\end{lemma}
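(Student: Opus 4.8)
The plan is to count $EB^{-1}(1)$ directly under the uniform distribution on $(D\times D)^n$, which is what ``fraction of inputs'' refers to. First I would unwind the definition: writing an input as $((s_1,t_1),\dots,(s_n,t_n))$, the $i$-th and $j$-th pairs are bidistinct exactly when $s_i\neq t_j$ and $t_i\neq s_j$, so (since these conditions range over all ordered pairs and are symmetric) $EB$ evaluates to $1$ precisely when $s_i\neq t_j$ for all $1\le i,j\le n$. Equivalently, the two value sets $\{s_1,\dots,s_n\}$ and $\{t_1,\dots,t_n\}$ are disjoint. This matches the intended reading that $EB=1$ iff $\lcs(s_1\cdots s_n,t_1\cdots t_n)=0$.

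Next I would lower-bound the number of such inputs by fixing the first coordinates. For any fixed tuple $(s_1,\dots,s_n)$, its set of distinct values has some size $k\le n$, and a good input is completed by choosing each $t_j$, independently, from the $|D|-k$ values outside this set. Hence the number of good completions is $(|D|-k)^n\ge(|D|-n)^n$, and summing over all $|D|^n$ choices of $(s_1,\dots,s_n)$ gives
$$|EB^{-1}(1)|\ \ge\ |D|^n\,(|D|-n)^n,$$
so the fraction of good inputs is at least $(1-n/|D|)^n$.

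Finally I would convert this into the claimed constant using the hypothesis $|D|\ge 2n^2$. Setting $x=n/|D|\le \tfrac1{2n}\le\tfrac12$, the elementary inequality $1-x\ge e^{-2x}$ (valid on $[0,\tfrac12]$, since $g(x)=1-x-e^{-2x}$ vanishes at $0$, is unimodal, and stays nonnegative up to $\tfrac12$) yields $(1-n/|D|)^n\ge e^{-2n^2/|D|}\ge e^{-1}$, where the last step uses $|D|\ge 2n^2$; note the constant $2$ in the hypothesis is exactly calibrated so that $2n^2/|D|\le 1$. The only ``obstacle'' here is bookkeeping: one must keep the product bound honest (the count of distinct $s$-values can only drop below $n$, which helps) and pick an elementary inequality that lands on $1/e$ rather than a weaker constant. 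In fact a cruder union bound over the $n^2$ ordered pairs, $\Pr[\exists i,j:\ s_i=t_j]\le n^2/|D|\le \tfrac12$, already gives fraction at least $\tfrac12\ge 1/e$, so if the sharper form of the statement is not needed elsewhere this one-line argument suffices.
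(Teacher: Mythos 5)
Your proof is correct, and it is worth comparing with the paper's, since both are direct counting arguments but they decompose the count differently and, more importantly, finish with different inequalities. The paper builds an input pair by pair, choosing the $i$-th pair so as to avoid the coordinates already used, which gives $|EB^{-1}(1)| \ge (|D|-1)^2(|D|-2)^2\cdots(|D|-n)^2$, and then bounds $\prod_{i=1}^{n}(1-\tfrac{i}{|D|})^2 \ge (1-\tfrac{1}{2n})^{2n} \ge 1/e$. You instead fix the tuple of first coordinates arbitrarily and count completions, getting $|EB^{-1}(1)| \ge |D|^n(|D|-n)^n$; both counts are valid, and yours identifies cleanly that $EB=1$ is exactly disjointness of the two value sets. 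The real divergence is in the last step: the paper's inequality $(1-\tfrac{1}{2n})^{2n}\ge 1/e$ is, strictly speaking, false for every finite $n$, because $(1-1/m)^m$ increases to $1/e$ from below; the lemma survives only because the preceding product bound is loose. Your route avoids this defect: $1-x\ge e^{-2x}$ on $[0,\tfrac12]$ is a correct elementary inequality and yields $(1-n/|D|)^n \ge e^{-2n^2/|D|}\ge e^{-1}$ rigorously. Your closing union-bound observation is also correct and is arguably the cleanest proof of all: the fraction of bad inputs is at most $n^2/|D|\le \tfrac12$, so at least half the inputs are in $EB^{-1}(1)$, which is stronger than the claimed $1/e$; and since the constant only enters the trade-off theorem as an additive $O(1)$ term in the exponent of the bound on $\alpha(R)$, any constant fraction suffices there.
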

\begin{proof}
	The size of $EB^{-1}(1)$ is at least $(|D|-1)^2\cdot (|D|-2)^2 \cdot \ldots \cdot (|D|-n)^2$. Hence, $|EB^{-1}(1)| = |D|^{2n} \prod\limits_{i=1}^{n}(1 - \frac{i}{|D|})^2 \ge |D|^{2n} (1-\frac{1}{2n})^{2n} \ge |D|^{2n}/{e}$. 
	\qed
\end{proof}

\begin{lemma}
\label{lm:RisSmall}
	For any embedded rectangle $R = (B, A_1, A_2) \subseteq EB^{-1}(1)$ we have $\alpha(R) \leq 2^{-2m(R)}$.
\end{lemma}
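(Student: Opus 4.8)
The plan is to extract from the rectangle structure a single collision‑avoidance constraint, turn it into a constraint on the sizes of the value sets occurring on $A_1$ and $A_2$, and then finish with a two‑variable AM--GM. Throughout, recall that for $EB$ the domain is $D\times D$, so that $\alpha_j(R)=|B_{A_j}|/|D\times D|^{m_j}=|B_{A_j}|/|D|^{2m_j}$ and $m_j(R)=|A_j|$.

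First I would unfold what $R\subseteq EB^{-1}(1)$ combined with property (ii) of an embedded rectangle gives. For any $\tau_1\in B_{A_1}$ and any $\tau_2\in B_{A_2}$ the combined input $\tau_1\tau_2\sigma$ lies in $B\subseteq EB^{-1}(1)$, hence is everywhere bidistinct. Restricting to one index $i\in A_1$ and one index $j\in A_2$, bidistinctness of the $i$‑th and $j$‑th pairs, with $(s_i,t_i)=\tau_1(i)$ and $(s_j,t_j)=\tau_2(j)$, says $s_i\ne t_j$ and $t_i\ne s_j$. The crucial point is that $\tau_1$ and $\tau_2$ range \emph{independently} over $B_{A_1}$ and $B_{A_2}$, so these values may be chosen independently; therefore the set of first coordinates occurring at position $i$ over $B_{A_1}$ must be disjoint from the set of second coordinates occurring at position $j$ over $B_{A_2}$, and symmetrically for the second inequality.

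Next I would aggregate over positions. Let $S_1$ and $T_1$ be the sets of all first‑ and second‑coordinate values, respectively, appearing at some position of $A_1$ across $B_{A_1}$, and define $S_2,T_2$ analogously for $A_2$. The pointwise disjointness upgrades to $S_1\cap T_2=\emptyset$ and $T_1\cap S_2=\emptyset$, whence $|S_1|+|T_2|\le |D|$ and $|T_1|+|S_2|\le|D|$. On the other hand, every $\tau_1\in B_{A_1}$ assigns to each position a pair in $S_1\times T_1$, so $|B_{A_1}|\le(|S_1|\,|T_1|)^{m_1}$ and $\alpha_1(R)\le\bigl(|S_1|\,|T_1|/|D|^2\bigr)^{m_1}$; similarly $\alpha_2(R)\le\bigl(|S_2|\,|T_2|/|D|^2\bigr)^{m_2}$. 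Writing $x=|S_1|/|D|$, $y=|T_1|/|D|$, $u=|S_2|/|D|$, $v=|T_2|/|D|$, the constraints become $x+v\le 1$ and $y+u\le 1$, while $\alpha_1(R)\le(xy)^{m_1}$ and $\alpha_2(R)\le(uv)^{m_2}$.

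Finally I would close the estimate. Since $x,y,u,v\in[0,1]$ and $m_1,m_2\ge m(R)$, I may replace the exponents by $m(R)$, and then use $\min(a,b)\le\sqrt{ab}$ to get $\alpha(R)=\min(\alpha_1,\alpha_2)\le\min\bigl((xy)^{m(R)},(uv)^{m(R)}\bigr)\le(xyuv)^{m(R)/2}$. Applying AM--GM to the disjointness constraints yields $xv\le((x+v)/2)^2\le 1/4$ and $yu\le 1/4$, so $xyuv\le 1/16$ and hence $\alpha(R)\le(1/16)^{m(R)/2}=2^{-2m(R)}$, as required. I expect the only genuine obstacle to be the first step: correctly deducing disjointness of the \emph{entire} value sets from pointwise bidistinctness. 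This is exactly where the product (rectangle) structure is indispensable---without the independence of the choices on $A_1$ and $A_2$ one could only constrain matched pairs $(\tau_1,\tau_2)$ rather than all combinations. The remaining steps are routine counting and AM--GM.
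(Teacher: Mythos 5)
Your proof is correct and follows essentially the same route as the paper's: both deduce from the rectangle property (ii) that the first-coordinate values appearing on $A_1$ are disjoint from the second-coordinate values on $A_2$ (and symmetrically), bound $|B_{A_j}|$ by the product of the coordinate-set sizes raised to $|A_j|$, and finish with AM--GM on the two disjointness constraints. The only difference is cosmetic: the paper concludes that at least one side's value set has density at most $1/4$ and uses that side alone, while you bound $\min(\alpha_1,\alpha_2)$ by the geometric mean $\sqrt{\alpha_1\alpha_2}$; both arrangements give the same $4^{-m(R)}=2^{-2m(R)}$ bound.
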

\begin{proof}
Let $S_j$ be the subset of $D \times D$ that appear on indices in $A_j$, i.e., $S_j = \bigcup_{\tau \in B_{A_j}}\{\tau(i):i\in A_j\}$, $j=1,2$. Clearly, all elements in $S_1$ must be bidistinct from all elements in $S_2$. If this was not the case $B$ would contain a vector with two non-bidistinct elements of $D \times D$. 
We will prove that $\min(|S_1|,|S_2|) \leq |D|^2/4$. Let us first argue that this implies the lemma.
For $j=1$ or $j=2$, we get that $|B_{A_j}| \leq (|D|^2/4)^{|A_j|}$, and thus $\alpha_j(R) \leq (|D|^2/4)^{|A_j|}/(|D|^2)^{|A_j|} = 4^{-|A_j|} \leq 4^{-m(R)} = 2^{-2m(R)}$.

It remains to prove that $\min(|S_1|,|S_2|) \leq |D|^2/4$. For $j\in\{1,2\}$ let $X_j$ and $Y_j$ denote the set of first and second 
coordinates that appear in $S_j$. Note that by bidistinctness $X_1\cap Y_2 = X_2\cap Y_1 = \emptyset$.
Moreover $|S_j|\le |X_j||Y_j|$ and therefore $\sqrt{|S_j|}\le \sqrt{|X_j||Y_j|} \le \tfrac{1}{2}(|X_j|+|Y_j|)$. 
Consequently
$2(\sqrt{|S_1|}+\sqrt{|S_2|})\le |X_1|+|Y_1|+|X_2|+|Y_2| =  (|X_1|+|Y_2|)+(|Y_1|+|X_2|)\le 2|D|$
and thus $\min(\sqrt{|S_1|},\sqrt{|S_2|})\le |D|/2$, i.e. $\min(|S_1|,|S_2|)\le |D|^2/4$ as claimed.
\qed
\end{proof}

\begin{theorem}
	Any $n$-variate branching program $\Program$ of length $T$ and size $2^S$ over domain $D$, $|D| \ge 2n^2$, which computes the element bidistinctness function $EB$, requires $T = \Omega(n \sqrt{\log(n/S) / \log \log (n/S)})$ time.
\end{theorem}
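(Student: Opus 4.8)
The plan is to argue by contraposition using the standard embedded-rectangle method of Beame et al.: a branching program that is simultaneously short and small would force the YES-set $EB^{-1}(1)$ to contain an embedded rectangle that is both too large (by \autoref{lm:largeR}) and too small (by \autoref{lm:RisSmall}), a contradiction. Since $EB$ reads $n$ variables each ranging over $D\times D$, I view $\Program$ as an $n$-variate branching program over $\tilde D=D\times D$, so that $|\tilde D^{\,n}|=|D|^{2n}$ and the $|D^n|$ appearing in \autoref{lm:largeR} is $|D|^{2n}$. Let $k$ be the least integer $\ge 8$ with $T\le (k-2)n$; then $k=\Theta(\max(1,T/n))$, with $T/n\le k$ and (for $k$ large) $k\le T/n+O(1)$. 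Set $q=2^{-40}k^{-8}$ and $r=\lceil q^{-5k^2}\rceil$, the smallest value \autoref{lm:largeR} permits. I first work in the regime $r\le n$, where all hypotheses $k\ge 8$, $q\le 2^{-40}k^{-8}$, $n\ge r\ge q^{-5k^2}$ of \autoref{lm:largeR} hold, and dispose of the opposite regime at the end.

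Applying \autoref{lm:largeR} produces an embedded rectangle $R\subseteq \Program^{-1}(1)=EB^{-1}(1)$ with $m(R)\ge \tfrac12 q^{2k^2}n$ and $\alpha(R)\ge 2^{-\sqrt q\,m(R)-Sr}\,|EB^{-1}(1)|/|\tilde D^{\,n}|$. Because $|D|\ge 2n^2$, \autoref{lm:EBisBIG} gives $|EB^{-1}(1)|/|\tilde D^{\,n}|\ge 1/e$, so $\alpha(R)\ge \tfrac1e 2^{-\sqrt q\,m(R)-Sr}$. On the other hand \autoref{lm:RisSmall} gives $\alpha(R)\le 2^{-2m(R)}$. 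Comparing the two estimates and taking logarithms yields
$$ Sr \;\ge\; (2-\sqrt q)\,m(R)-O(1)\;\ge\; m(R)-O(1),$$
where the last step uses $\sqrt q\le 2^{-20}<1$.

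Substituting $m(R)\ge \tfrac12 q^{2k^2}n$ and, since $q^{-2k^2}\le r\le n$ forces $n$ large relative to $q^{-2k^2}$, absorbing the additive constant gives $Sr\ge \tfrac14 q^{2k^2}n$; then $r\le 2q^{-5k^2}$ turns this into $S\ge \tfrac18 q^{7k^2}n$. Plugging in $q=2^{-40}k^{-8}$ and taking logarithms produces
$$ \log(n/S)\;\le\; 3+7k^2\log(1/q)\;=\;O\!\left(k^2\log k\right).$$
It remains to invert this. Writing $L=\log(n/S)$, I show $k=\Omega(\sqrt{L/\log L})$: if instead $k<c\sqrt{L/\log L}$ then $k^2\le c^2 L/\log L$ and $\log k\le \log L$, so $k^2\log k\le c^2 L$, and choosing $c$ small enough contradicts $L=O(k^2\log k)$. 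Since $T/n\ge k-O(1)$, this yields $T=\Omega\!\big(n\sqrt{\log(n/S)/\log\log(n/S)}\big)$. In the discarded regime $r>n$ one has $q^{-5k^2}>n$, i.e. $k^2\log k=\Omega(\log n)$ and hence $k=\Omega(\sqrt{\log n/\log\log n})$; as $T/n=\Theta(k)$ and $\log(n/S)\le \log n$, the claimed bound holds there a fortiori.

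I expect the two genuinely delicate points to be, first, the bookkeeping of the parameter constraints of \autoref{lm:largeR} — verifying $k\ge 8$, $q\le 2^{-40}k^{-8}$, and $n\ge r\ge q^{-5k^2}$ at once, and cleanly isolating the trivial large-$T$ regime where $r$ would exceed $n$ — and second, the final inversion of $\log(n/S)=O(k^2\log k)$ into the stated $\sqrt{\log(n/S)/\log\log(n/S)}$ form, which rests on $\log k=\Theta(\log\log(n/S))$. Once the rectangle $R$ is produced, the three lemma applications combine essentially mechanically.
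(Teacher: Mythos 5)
Your proposal is correct and follows essentially the same route as the paper's proof: define $k$ from $T/n$, apply \autoref{lm:largeR} with $q=2^{-40}k^{-8}$ and $r=\lceil q^{-5k^2}\rceil$, lower-bound the rectangle density via \autoref{lm:EBisBIG}, upper-bound it via \autoref{lm:RisSmall}, deduce $S\ge q^{O(k^2)}n$, and invert to get $k=\Omega\bigl(\sqrt{\log(n/S)/\log\log(n/S)}\bigr)$. In fact you are somewhat more careful than the paper, which silently assumes $r\le n$ and omits the inversion details that you spell out; the only point you gloss over is the base case $k=8$ (where $T/n\ge k-O(1)$ need not hold), but there the bound is trivial since $\log(n/S)=O(1)$ and any program computing $EB$ has length $\Omega(n)$.
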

\begin{proof}
The proof repeats the proof of Theorem 6.13~\cite{DecisionProblemsRandomized}. 
We restore the details omitted in~\cite{DecisionProblemsRandomized} for the sake of completeness.  
Suppose that the length of $\Program$ is $T = (k-2) n/2$ and size~$2^S$. 
Apply Lemma~\ref{lm:largeR} with $q = 2^{-40} k^{-8}$ and $r =\ceil{q^{-5k^2}}$. 
We then obtain an embedded rectangle $R \in EB^{-1}(1)$ such that $m(R) \ge q^{2k^2}n/4$ and $\alpha(R) \ge 2^{-q^{1/2} m(R) - Sr}/e = 2^{-q^{1/2}m(R) - Sr - \log e}$.
From Lemma~\ref{lm:RisSmall} we have $2^{-2m(R)} \ge 2^{-q^{1/2}m(R) - Sr - \log e}$ and thus $Sr \ge m(R)(2-q^{1/2}) - \log e \ge m(R)/2$. Consequently, $S \ge 	
q^{2k^2}n/(8r)$. Remember that $q = 2^{-40} k^{-8}$ and $r =\ceil{q^{-5k^2}}$, which means that $\Program$ requires at least $k^{-ck^2} n$ space for 
some constant $c>0$. That is, $k^{ck^2} \ge n/S$, which implies $k = \Omega(\sqrt{\log(n/S)/\log\log(n/S})$. Substituting $k =2T/n+2$, we obtain the 
claimed bound.
\qed
\end{proof}

\begin{corollary}
Any deterministic RAM algorithm that solves the element bidistinctness (EB) problem on inputs in $(D \times D)^{n}$, $|D| \ge 2n^2$, using $\tau\le \frac{n}{\log n}$ space, must use at least $\Omega (n \sqrt{\log (n /(\tau \log n))/ \log \log (n/(\tau \log n))}\big)$ time.
\end{corollary}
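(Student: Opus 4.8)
The plan is to derive the corollary from the preceding theorem by passing through the Borodin--Cook simulation of word-RAM algorithms by branching programs. Suppose we are given a deterministic word-RAM algorithm that solves the EB problem on inputs from $(D\times D)^n$, $|D|\ge 2n^2$, using $\tau$ words of space and running time $T$. Each of the $n$ variables ranges over $D\times D$; as the domain is polynomially bounded, the word size is $w=\Theta(\log n)$. Invoking the simulation lemma stated earlier (with domain $D\times D$ and $n$ variables) produces an $n$-variate branching program computing $EB$ of length $\Oh(T)$ and size $2^{S}$, where $S=\Oh(w\tau+\log n)$. Because $\tau\ge 1$, the term $w\tau=\Theta(\tau\log n)$ dominates $\log n$, so we may take $S=\Theta(\tau\log n)$.

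Next I would apply the preceding theorem to this branching program. Its hypotheses hold: the program is $n$-variate over a domain of size at least $2n^2$ and computes $EB$. The theorem therefore yields the length lower bound $\Omega\big(n\sqrt{\log(n/S)/\log\log(n/S)}\big)$; since the branching program has length $\Theta(T)$, this is simultaneously a lower bound on $T$. It only remains to re-express $\log(n/S)$ in terms of $\tau$.

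Substituting $S=\Theta(\tau\log n)$ gives $n/S=\Theta(n/(\tau\log n))$. The hypothesis $\tau\le n/\log n$ guarantees $n/(\tau\log n)\ge 1$, keeping the argument of the logarithms in the regime where the bound is meaningful. Since $x\mapsto \sqrt{\log x/\log\log x}$ is eventually increasing, replacing $n/S$ by a constant multiple of $n/(\tau\log n)$ perturbs the expression only by a constant factor, and we obtain $T=\Omega\big(n\sqrt{\log(n/(\tau\log n))/\log\log(n/(\tau\log n))}\big)$, as claimed.

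The only delicate point I anticipate is this final asymptotic manipulation: one must verify that absorbing the constants hidden in $S=\Theta(\tau\log n)$ into the nested logarithm does not erode the bound. This is exactly where the monotonicity of $x\mapsto\sqrt{\log x/\log\log x}$ and the assumption $\tau\le n/\log n$ (which keeps $n/(\tau\log n)$ above $1$) are used. Everything else is a mechanical composition of two results already established.
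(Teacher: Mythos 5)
Your derivation is correct and matches the paper's intended argument exactly: the paper states this corollary without proof precisely because it follows immediately from the branching-program lower bound theorem combined with the Borodin--Cook simulation lemma, with $S = \Oh(w\tau + \log n) = \Theta(\tau\log n)$ bits absorbed into the nested logarithms as you describe. Your attention to the monotonicity needed to swallow the hidden constants, and to the role of the hypothesis $\tau \le n/\log n$, is a reasonable bit of extra care rather than a deviation from the paper's route.
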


\begin{corollary}[\autoref{thm:lowerbound}]
Given two documents of total length $n$ from an alphabet $\Sigma$ of size at least $n^2$, any deterministic RAM algorithm, which uses $\tau\le \frac{n}{\log n}$ space to compute the longest common substring of both documents, must use time $\Omega (n \sqrt{\log (n/(\tau\log n))/ \log\log (n/(\tau \log n))})$.
\end{corollary}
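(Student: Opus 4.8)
The plan is to read this statement off the element-bidistinctness corollary proved just above, using the reduction already recorded in the text. Recall that for strings over $D$ one has $EB((s_1,t_1),\dots,(s_N,t_N)) = 1$ exactly when $|\lcs(s_1\cdots s_N, t_1\cdots t_N)| = 0$; hence any algorithm computing the length of the longest common substring of two documents also decides whether $|\lcs|\ge 1$ and thus evaluates $EB$ (its complement) with only $\Oh(1)$ extra work, the same space, and the same time up to an additive constant. Since a lower bound only needs a hard sub-family, I would restrict attention to pairs of documents of equal length $N = n/2$ over $\Sigma$; these have total length $n$ and encode an arbitrary $EB$ instance in $(\Sigma\times\Sigma)^N$. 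The alphabet hypothesis transfers: the $EB$ corollary asks for $|D|\ge 2N^2$, and with $D=\Sigma$ we have $2N^2 = n^2/2 \le n^2 \le |\Sigma|$.

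With the instances fixed, I would invoke the $EB$ corollary at parameter $N = n/2$, obtaining that any $\tau$-space RAM algorithm for $EB$ on $(\Sigma\times\Sigma)^N$ needs time $\Omega\big(N\sqrt{\log(N/(\tau\log N))/\log\log(N/(\tau\log N))}\big)$, and transport this bound back along the reduction to the $\lcs$ algorithm. It then only remains to rewrite the expression in terms of $n$. Because $N = n/2 = \Theta(n)$ and $\log N = \Theta(\log n)$, we get $N/(\tau\log N) = \Theta(n/(\tau\log n))$, and in the nontrivial regime (where this ratio exceeds $1$) the two outer logarithms change the quantity by only constant factors, which are absorbed into $\Omega(\cdot)$. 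This produces exactly the claimed $\Omega\big(n\sqrt{\log(n/(\tau\log n))/\log\log(n/(\tau\log n))}\big)$.

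The genuine content lies entirely in the branching-program theorem and its $EB$ specialization, so the only real work here is the parameter bookkeeping induced by the substitution $n\mapsto n/2$. The point requiring the most care is reconciling the side-conditions: the $EB$ corollary is valid for $\tau \le N/\log N$, which differs from the target range $\tau \le n/\log n$ by a constant factor, so one must check that the remaining sliver of the range is covered or is asymptotically negligible (indeed the bound degenerates near $\tau = n/\log n$, where $n/(\tau\log n)\to 1$). I expect this to be routine, since halving $n$ perturbs each nested logarithm only by an additive or multiplicative constant and all such constants disappear inside the asymptotic notation.
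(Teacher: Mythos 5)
Your proposal is correct and takes essentially the same route as the paper: the paper derives this corollary exactly by composing the EB-to-LCS reduction (stated in its Element Bidistinctness paragraph) with the RAM lower-bound corollary for EB, leaving the parameter translation implicit. Your bookkeeping (restricting to equal-length documents with $N = n/2$, checking $2N^2 \le n^2 \le |\Sigma|$, and absorbing the constant-factor shifts in the nested logarithms, with the degenerate regime near $\tau = n/\log n$ being trivial anyway) just fills in the routine details the paper omits.
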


\section{Conclusions}
The main problem left open by our work is to settle the optimal time-space product for the LCS problem. While it is tempting to guess that the answer lies in the vicinity of $\Theta(n^2)$, it seems really difficult to substantially improve our lower bound. Strong time-space product lower bounds have so far only been established in weaker models (e.g., the comparison model) or for multi-output problems (e.g., sorting an array, outputting its distinct elements and various pattern matching problems). Proving an $\Omega(n^2)$ time-space product lower bound in the RAM model for \emph{any} problem where the output fits in a constant number of words (e.g., the LCS problem) is a major open problem.

\bibliographystyle{splncs03}
\bibliography{paper}
\end{document}